\newcommand{\ie}{i.e.,\xspace}
\newcommand{\eg}{e.g.,\xspace}
\newcommand{\simrel}[0]{\preccurlyeq}
\newcommand{\fct}[1]{\set{#1}} 
\newcommand{\set}[1]{\ensuremath{\mathsf{#1}}}
\newcommand{\fctnew}[1]{\ensuremath{\mathsf{#1}}}
\renewcommand{\max}[0]{\fctnew{max}}
\renewcommand{\min}[0]{\fctnew{min}}
\newcommand{\avg}[0]{\fctnew{avg}}
\newcommand{\defeq}[0]{=}
\newcommand{\Succ}[1]{\set{Succ}\left(#1\right)}
\newcommand{\Reach}[1]{\set{Reach}\left(#1\right)}
\newcommand{\Safe}{\ensuremath{\text{\textit{Safe}}}} 
\newcommand{\Unsafe}{\ensuremath{\text{\textit{Unsafe}}}}
\newcommand{\Max}[1]{\ensuremath{\max^{\simrel}\!\left(#1\right)}}
\newcommand{\dc}[1]{\ensuremath{\downarrow^{\simrel}\!\!\left(#1\right)}}
\newcommand{\uc}[1]{\ensuremath{\uparrow^{\simrel}\!\!\left(#1\right)}}
\newcommand{\Rtilde}{\ensuremath{\widetilde{R}}}
\newcommand{\Ntilde}{\ensuremath{\widetilde{N}}}
\newcommand{\nat}{\fctnew{nat}}
\newcommand{\rct}{\fctnew{rct}}
\newcommand{\crit}{\fctnew{cri}}
\newcommand{\systemstate}[1][]{\ensuremath{\langle \rct_{S^{#1}}, \nat_{S^{#1}}, \crit_{S^{#1}} \rangle}}
\newcommand{\ttd}{\fctnew{ttd}}
\newcommand{\ttvd}{\fctnew{ttvd}}
\newcommand{\dbf}{\fctnew{dbf}}
\newcommand{\df}{\fctnew{df}}
\newcommand{\nj}{\fctnew{nj}}
\newcommand{\States}[1]{\ensuremath{\set{States}\left(#1\right)}}
\newcommand{\Active}[1]{\ensuremath{\set{Active}\left(#1\right)}}
\newcommand{\Eligible}[1]{\ensuremath{\set{Eligible}\left(#1\right)}}
\newcommand{\Completed}[1]{\ensuremath{\set{Completed}\left(#1\right)}}
\newcommand{\DeadlineMiss}[1]{\ensuremath{\set{DeadlineMiss}(#1)}}
\newcommand{\run}{\fctnew{run}}
\newcommand{\ran}{\fctnew{ran}}
\newcommand{\schedule}{\fctnew{sch}}
\newcommand{\hi}{\ensuremath{\fct{HI}}}
\newcommand{\lo}{\ensuremath{\fct{LO}}}
\newcommand{\true}{\ensuremath{\fct{True}}}
\newcommand{\false}{\ensuremath{\fct{False}}}
\newcommand{\edfvd}{\ensuremath{\schedule_{\text{EDF-VD}}}}
\newcommand{\lwlf}{\ensuremath{\schedule_{\text{LWLF}}}}
\newcommand{\srun}{\ensuremath{\prime}} 
\newcommand{\scom}{\ensuremath{-}}
\newcommand{\sreq}{\ensuremath{+}}
\newcommand{\idle}{\ensuremath{\simrel_{idle}}}
\newcommand{\laxity}{\fctnew{laxity}}
\newcommand{\wlaxity}{\fctnew{worstLaxity}}
\newcommand{\sumlax}{\ensuremath{\ell}}
\newcommand{\drop}[1]{}
\newcommand{\sigCmp}[2]{\ensuremath{\fctnew{sigCmp}_{#1}\left(#2\right)}}
\newcommand{\critUp}[2]{\ensuremath{\fctnew{critUp}_{#1}\left(#2\right)}}
\newcommand{\SgTrans}[1]{\ensuremath{\xrightarrow{#1}_{\mathsf{sg}}}}
\newcommand{\CpTrans}[1]{\SgTrans{#1}} 
\newcommand{\RnTrans}[1]{\ensuremath{\xrightarrow{#1}_{\mathsf{rn}}}}
\newcommand{\CtTrans}[1]{\RnTrans{#1}} 
\newcommand{\RlTrans}[1]{\ensuremath{\xrightarrow{#1}_{\mathsf{rl}}}}
\newcommand{\RqTrans}[1]{\RlTrans{#1}} 
\newcommand{\MinD}{\ensuremath{\fctnew{min_d}}}
\newcommand{\MinVD}{\ensuremath{\fctnew{min_{vd}}}}
\newcommand{\MinWL}{\ensuremath{\fctnew{min_{wl}}}}
\newcommand{\etal}{\textit{et al.}}
\newcommand{\sth}{s.t.}
\newcommand{\emphindef}[1]{\emph{\textbf{#1}}}
\definecolor{darkgreen}{RGB}{0,120,0}
\newcommand{\cmark}{\textcolor{darkgreen}{\ding{52}}}
\newcommand{\xmark}{\textcolor{red}{\ding{54}}}
\newcommand{\speedup}{99.998}
\newcommand{\timeoutMins}{15}
\newcounter{oraclecounter}
\newtcolorbox[auto counter]{oraclebox}[3][]{%
    colbacktitle=gray!20, 
    coltitle=black, 
    colback=white, 
    colframe=black, 
    sharp corners,
    title={\textbf{Oracle \thetcbcounter: #2.} #3}, 
    #1,
    fonttitle=\sffamily,
    detach title,
    before upper={\tcbtitle\par\medskip}, 
    colframe=black, colback=white, 
    coltext=black, 
    boxsep=5pt, 
    arc=0pt,outer arc=0pt, 
}
\newtheorem{definition}{Definition}
\newtheorem{proposition}{Proposition}
\newtheorem{theorem}{Theorem}
\newtheorem{lemma}{Lemma}
\newtheorem{corollary}{Corollary}
\newtheorem{oracle}{Oracle}
\begin{document}

\title{Exact schedulability test for sporadic mixed-criticality real-time systems using antichains and oracles}

\author{Picard, Simon\\
\url{snspicard@gmail.com}\\
Ourway, Belgium
\and
Paolillo, Antonio\\
\url{antonio.paolillo@vub.be}\\
Vrije Universiteit Brussel, Belgium
\and
Geeraerts, Gilles\\
\url{gilles.geeraerts@ulb.be}\\
Universit\'e libre de Bruxelles, Belgium
\and
Goossens, Joël\\
\url{joel.goossens@ulb.be}\\
Universit\'e libre de Bruxelles, Belgium
}

\maketitle

\begin{abstract}
This work addresses the problem of exact schedulability assessment in
uniprocessor mixed-criticality real-time systems with sporadic task sets.
We model the problem by means of a finite automaton that has to be
explored in order to check for schedulability.
To mitigate the state explosion problem, we provide a generic
algorithm which is parameterised by several techniques called
oracles and simulation relations.
These techniques leverage results from the scheduling
literature as ``plug-ins'' that make the algorithm more efficient in practice.
Our approach 
achieves up to a \speedup{}\% reduction in the search space
required for exact schedulability testing, making it practical for a range of task sets, up to 8 tasks or maximum periods of 350.
This method enables to challenge the pessimism of an existing schedulability test
and to derive a new dynamic-priority scheduler, demonstrating its good performance.
This is the full version of an RTNS 2024 paper \cite{PPGG24}.
\end{abstract}

\section{Introduction}
\label{sec:introduction}

The industry push for integrating systems of varying criticality levels onto a single platform has driven the real-time community to develop the mixed-criticality system model, hence producing a large body of research results for systems designed to degrade gracefully~\cite{burns2022mixed}.
The dual-criticality model
enables the concurrent execution of high-criticality tasks (\hi{}) alongside low-criticality tasks (\lo{}).
The \hi{} tasks are certified with two estimates of their worst-case execution time
--- an optimistic estimate and a pessimistic upper bound ---
whereas the \lo{} tasks are less critical and can be safely disabled should a \hi{} task exceed its optimistic estimate.

Many results about this model have been produced,
addressing
variations of the model~\cite{ekberg2014generalmcs,baruah2016mixedPeriods}
varying CPU speeds~\cite{
    she2022constrainedDvfsMixcrit,
    she_rtns2021_mixcritSpeedMultiproc},
multi-core and parallel systems~\cite{haohan_ecrts2012_globalSchedMixcrit,baruah_rtss2016_fedDagMixcrit},
and applications in industrial settings~\cite{
    esper2018industrial,
    law_ecrts2019_industrialMixcrit,
    paolillo2017porting}.
Despite this progress, however, one fundamental problem remains open:
given a generic scheduling algorithm,
how to assess the schedulability of a mixed-criticality task set?
%
Agrawal and Baruah proved that determining the schedulability of
independent dual-criticality periodic or sporadic implicit-deadline
tasks is NP-hard in the strong
sense~\cite{agrawalbaruah_ecrts2018_intractableMixcrit}, hinting that
the schedulability assessment of a given scheduling algorithm might be
hard as well. Indeed, to the best of our knowledge, no
exact --- necessary \emph{and} sufficient --- schedulability test has
been produced to date for a general scheduling algorithm, even for a
uniprocessor platform.

The quest for schedulability tests has produced
two main lines of research. On the one hand, the real-time community
has focussed on efficient tests for specific schedulers, but they are
in many cases not exact. For example, EDF-VD~\cite{BaruahBDMSS11}, a
notable scheduling algorithm for mixed-criticality systems only
provides a \emph{sufficient} and notoriously pessimistic test. On the
other hand, inspired by the formal methods community, several works
have proposed to model the possible behaviours of the system by means
of an \emph{automaton}~\cite{bakerbrute} whose states correspond to
all the possible states of the system. In this case, looking for
potential deadline misses can be done by analysing all the states of
the automaton. While such approaches are guaranteed to provide an
exact test \cite{asyaban2018exact,yasmina2020accurate}, they suffer
from the state explosion problem making them impractical for
realistic systems.

In this work, we seek to reconcile both approaches. While our
work is rooted in the automaton-based approach, we show how results
and knowledge from the real-time community can be exploited to make
the traversal of automata states more tractable. We
believe that such hybrid techniques are very rare in the literature,
with some notable exceptions like the works of Asyaban
\cite{asyaban2018exact} (which addresses mixed criticality in the FTP
case) and Ranjha~\cite{9804591}. However, those
papers develop \textit{ad hoc} techniques (in the cases of FTP and FJP
scheduling), while we strive for more
generality.

Our contributions are thus as follows.  (1) We provide an automaton
model for systems of dual-criticality sporadic tasks on a uniprocessor
platform where the scheduler is left as a parameter. Hence, (2) we
obtain an \emph{\textbf{exact}} schedulability test for any given
scheduling algorithm that is deterministic and
memoryless. 
(3) This test consists in exploring the states of the automaton and is
parameterised by optimisations to leverage knowledge built in the
real-time community. More precisely, sufficient and necessary tests
can be exploited as \emph{oracles} that tell the algorithm to avoid
exploring some states. We also exploit a \emph{simulation relation}
between states to further prune the state space in the spirit of the
antichain approach of the formal method community
\cite{DDHR06,DR10,DR09,geeraerts2013multiprocessor}.  (4) We show how
to exploit this generic framework in our setting of mixed criticality
by defining proper oracles and a simulation relation. On this basis,
we evaluate empirically our approach.  (5) Experiments on task sets
generated randomly according to the model specification show promising
results: the efficiency of our approach enables to reduce the state
search space by up to \speedup{}\%.
The genericity of our approach also allows us (6) to evaluate the
EDF-VD scheduling algorithm and its associated test, showing that
the test is pessimistic
w.r.t. its actual scheduling capabilities.
(7) For illustration, we study and present a new dynamic-priority scheduler, LWLF.
We demonstrate its excellent performance, thanks to its anticipation of mode change impact when operating in $\lo$ mode.
\autoref{table:comparison-related} graphically compares our approach with the closest related work.

\newcommand{\singlecrit}{\textcolor{orange}{Single}}
\newcommand{\dualcrit}{\textbf{\textcolor{darkgreen}{Dual}}}
\newcommand{\ftpclass}{\textcolor{orange}{FTP}}
\newcommand{\anyclass}{\textbf{\textcolor{darkgreen}{Any}}}

\begin{table}[t]
  \centering
    \caption{%
    Comparing with the related work.
    }
    \label{table:comparison-related}
  
\begin{tabular}{lcccc}
\toprule
& \cite{bakerbrute}
& \cite{asyaban2018exact}
& \cite{lindstrom2011faster}
& Us \\
\midrule
Criticality         & \singlecrit{} & \dualcrit{}   & \singlecrit{} & \dualcrit{}   \\
Scheduling classes  & \anyclass{}   & \ftpclass{}   & \anyclass{}   & \anyclass{}   \\
Pruning rules       & \xmark        & \cmark        & \xmark        & \xmark        \\
Antichains          & \xmark        & \xmark        & \cmark        & \cmark        \\
Oracles             & \xmark        & \xmark        & \xmark        & \cmark        \\
Multi-processor     & \cmark        & \xmark        & \cmark        & \xmark        \\
\bottomrule
\end{tabular}
\end{table}


\section{Problem definition}
\label{sec:problemdef}

We consider a mixed-criticality sporadic task set $\tau =
\{\tau_1, \tau_2, \ldots,\tau_n\}$ with two levels of criticality, also referred to as a dual-criticality~\cite{burns2022mixed}, to be scheduled on a uniprocessor platform.
A dual-criticality sporadic task
$\tau_i \defeq \langle \langle C_i(\lo), C_i(\hi) \rangle, D_i, T_i, L_i \rangle$
is characterised by a \textit{minimum
interarrival time} $T_i > 0$, a \textit{relative deadline} $D_i > 0$, a \textit{criticality
level} $L_i \in \{\lo,\hi\}$ with $\hi > \lo$ and the \emph{worst-case execution time} (WCET) tuple $\langle C_i(\lo),
C_i(\hi)\rangle$.
Tasks will never execute for more than $C_i(L_i)$.
Time is assumed to be discrete, ergo $\forall i: T_i, D_i, C_i(\lo), C_i(\hi) \in
\mathbb{N} \setminus \{0\}$.
I.e., all timing parameters are strictly-positive integers.
It is assumed that $C_i(\lo) \leq C_i(\hi)$ when $L_i = \hi$ and $C_i(\hi) = C_i(\lo)$  when $L_i = \lo$.
A dual-criticality sporadic task $\tau_i$ releases an infinite number of jobs, with each job release being separated by at least $T_i$ units of time.
The absolute deadline of jobs are set $D_i$ units of time after their release, and jobs must signal completion before their absolute deadline.
We assume jobs are independent, as formulated by Vestal~\cite{vestal2007preemptive}.
At each clock-tick, the executing job can signal its completion.
If a job did not signal completion after exhausting its $C_i(\lo)$, then a mode change is triggered from $\lo{}$ to $\hi$:
jobs of $\lo{}$ tasks are discarded
and $\lo{}$ tasks are not allowed to release jobs any more;
active jobs of $\hi{}$ tasks receive an additional budget of $C_i(\hi) - C_i(\lo)$, and future jobs of $\hi{}$ tasks will receive a budget of $C_i(\hi)$.
%
%
As a running example in the work,
we define the task set $\tau^{a} = \lbrace \tau_1, \tau_2 \rbrace$
with $\tau_1 = \langle \langle 1, 2 \rangle, 2, 2, \hi \rangle$
and  $\tau_2 = \langle \langle 1, 1 \rangle, 2, 2, \lo \rangle$.

It is imperative,
when validating a dual-criticality task sets,
to first perform \emph{``due diligence''}
by ensuring the corresponding two following single-criticality task sets are schedulable:
\textit{(i)} the task set comprising only $\hi$ tasks, where
$\forall \tau_i: C_i = C_i(\hi)$,
and \textit{(ii)} the task set including both $\hi$ and $\lo$ tasks, where
$\forall \tau_i: C_i = C_i(\lo)$.

We aim to establish an exact schedulability test (necessary and sufficient) for any dual-criticality sporadic task set
$\tau$ that tells us whether the set is schedulable
--- \ie no job released by the tasks misses a deadline ---
with a given deterministic and preemptive scheduling algorithm with dynamic priorities.
We support both \emph{implicit} ($\forall \tau_i : D_i = T_i$) and \emph{constrained} ($\forall \tau_i : D_i \leq T_i$) deadlines.
%
%
The studied problem is difficult, notably due to two sources of non-determinism, the first relating to \textbf{the sporadic model} (we do not know when jobs are released) and the second to \textbf{mode change} (we do not know when it will occur).

\textbf{Notations.}
Assuming $\alpha, \beta \in \lbrace \lo, \hi \rbrace$:
the $\alpha$-utilisation of the task $\tau_i \in \tau: U^\alpha(\tau_i) = C_i(\alpha)/T_i$,
the $\alpha$-utilisation of the task set $\tau: U^\alpha(\tau) = \sum_{\tau_i \mid L_i \geq \alpha} U^\alpha(\tau_i)$,
the $\alpha$-utilisation of the tasks of criticality $\beta$ in the task set $\tau: U^\alpha_\beta(\tau) =  \sum_{\tau_i \mid L_i = \beta} U^\alpha(\tau_i)$,
and the average utilisation of the task set $\tau: U^{\avg}(\tau) = \frac{U^\lo(\tau)+U^\hi(\tau)}{2}$.

\section{Automaton based semantic}
\label{sec:automaton-semantic}

To obtain a formal definition of the semantics of the system and of the
schedulability problem, we develop an automaton-based formalism inspired from Baker and
Cirinei~\cite{bakerbrute}.
An automaton is a graph whose nodes are called \emph{states} and whose
(directed) edges are called \emph{transitions}.
States model the states of the system.
Changes from a state to another, triggered by a task, a job, or the scheduler,
are modelled by the transitions.
Hence, each path in the automaton is a possible execution of the system, and we will thus
look for executions that reach states where deadlines are missed (the so-called
\emph{failure states}) to check whether the system is schedulable or not.
We use the following formal definition of automaton:

\begin{definition}
  An \emphindef{automaton} is a tuple $A = \langle V,E,v_0,F \rangle$ where
  $V$ is a set of \textit{states},
  $E \subseteq V \times V$ is a set of \textit{transitions} between states,
  $v_0 \in V$ is the \textit{initial state}
  and $F \subseteq V$ is a set of \textit{failure states}.
  An automaton is finite iff $V$ is a finite set.

\end{definition}

The problem we consider on automata is the problem of \textit{safety} w.r.t. a designated set of
\emph{failure} states $F$ that need to be avoided.
A \emph{path} in a finite automaton $A = \langle V, E, v_0, F\rangle$ is a finite sequence
of states $v_1, v_2, \ldots, v_k$ such that $\forall\, 1 \leq j < k : (v_j, v_{j+1})
\in E$.
For a subset of states $V' \subseteq V$, if there exists a path $v_1,v_2,\ldots, v_k$ in
$A$ such that $v_k \in V'$, we say that $v_1$ can reach $V'$.
We denote the set of states that can be reached from a state $v \in V$ by $\Reach{v}$.
Then, the \textit{safety problem} asks, given an automaton $A$, whether the initial state
$v_0$ \emph{cannot} reach the set of failure states $F$ --- \ie there is no path from $v_0$
to any state $v \in F$ in the automaton, denoted by $\Reach{v_0} \cap F = \emptyset$.
If this is the case, we say that $A$ is \emph{safe}, otherwise (when $v_0$ can reach $F$)
we say it is \emph{unsafe}.

Let $\tau = \{\tau_1, \tau_2, \ldots,\tau_n\}$ be a task set as defined in~\autoref{sec:problemdef}.
We model the behaviour of $\tau$
by means of an automaton $A$, and we reduce the schedulability problem of
$\tau$ to an instance of the safety problem in $A$.
A state captures the following
dynamic (or ``run-time'') information about each task $\tau_i$:
\textit{(i)} the \textit{earliest next arrival time} $\nat(\tau_i)$ relative to the current instant, and
\textit{(ii)} the \textit{worst-case remaining execution time} $\rct(\tau_i)$ of the current job of $\tau_i$ for the current level of criticality.
In addition, we need to remember the current global \textit{criticality level} (or mode)
$\crit \in \lbrace \lo, \hi \rbrace$ of the system.
Hence, each system state will be a tuple of the form $\systemstate$:

\begin{definition}[System states]\label{systemstate}
  Let $\tau = \{\tau_1, \tau_2, \ldots, \tau_n\}$ be a mixed-criticality sporadic task system.
  Let $T_{\max} \defeq \max_{i} T_i$ and $C_{\max} \defeq \max_{i} C_i(L_i)$.
  A \emphindef{system state} of $\tau$ is a tuple $S = \systemstate$ where:
  $\nat_S: \tau \mapsto \{0,1,\ldots,T_{\max}\}$ associates each task $\tau_i$ to the minimal delay
  $\nat_S(\tau_i)$ that must elapse before the next job of the task can be released;
  $\rct_S: \tau \mapsto \{0,1,\ldots,C_{\max}\}$ associates each task $\tau_i$ to its maximal 
  remaining execution time $\rct_S(\tau_i)$ for the current criticality of the system; and $\crit_S \in \{\lo,\hi\}$ is the current
  criticality of the system.
  We denote by $\States{\tau}$ the set of all system states of $\tau$.
\end{definition}

Intuitively, each state contains the current run-time information of the systems at a particular instant,
\ie how far we are in the execution of the active jobs (with the $\rct_S(\tau)$ values)
and how close we are to the next releases of jobs (with the $\nat_S(\tau)$ values).
From these definitions, we derive other useful definitions in the following.
Notice that the symbols $\wedge$, $\vee$ and $\neg$ respectively denote the \textit{conjunction}, \textit{inclusive disjunction} and \textit{negation} logical operators.

\begin{definition}[Time to deadline]\label{ttd}
  Let $\ttd_S(\tau_i) \defeq \nat_S(\tau_i)-(T_i-D_i)$ be the \emphindef{time to deadline},
  the time remaining before the absolute deadline of the last submitted job~\cite{bakerbrute}
  of $\tau_i \in \tau$ in state $S$.
  Note that when deadlines are implicit, we have $\ttd_S(\tau_i) = \nat_S(\tau_i)$.
\end{definition}

\begin{definition}[Active tasks]\label{act}
  A task $\tau_i$ is \emphindef{active} in a state $S$ iff it currently has a
  job that is not completed in $S$.
  The set of active tasks in $S$ is $\Active{S} \defeq \{\tau_i \mid \rct_S(\tau_i) > 0 \}$.
\end{definition}

\begin{definition}[Eligible task]\label{eligible}
  A task $\tau_i$ is \emphindef{eligible} in the state $S=\systemstate$ iff it can release a
  job in this state
  --- \ie the task does not currently have an active job,
  the last job was submitted at least $T_i$ time units ago
  and its criticality is greater than or equal to the state's.
  The set of eligible tasks in $S$ is:
  $\Eligible{S} \defeq \{\tau_i \mid \rct_S(\tau_i) =  \nat_S(\tau_i) = 0 \wedge L_i \geq \crit_S\}$.
\end{definition}

\begin{definition}[Implicitly completed task]\label{completed}
  A task $\tau_i$ is \emphindef{implicitly completed} in the state $S=\systemstate$ iff its
  latest job has been executed for its maximal execution time.
  The set of implicitly completed tasks in $S$ is
  $\Completed{S} \defeq \{\tau_i \mid \rct_S(\tau_i) = 0 \wedge C_i(\crit_S)=C_i(L_i)\}$.
\end{definition}

Note that $C_i(\crit_S)=C_i(L_i)$ represents the condition ``its latest job has been
executed for its maximal execution time''.
The condition $\crit_S=L_i$ is not appropriate because it is possible for a $\hi$-critical
job to have the same execution time in $\lo$ and $\hi$.
Therefore, when $C_i(\lo) = C_i(\hi)$,
$\rct_S(\tau_i) = 0$ and $\crit_S=\lo$
then $\tau_i$ is implicitly completed because
$C_i(\crit_S)=C_i(\lo)=C_i(L_i)=C_i(\hi)$ even if $\crit_S=\lo\neq L_i=\hi$.

\begin{definition}[Deadline-miss state]\label{fail}
  A state $S$ is a \emphindef{deadline-miss state} if at least one task's job reached its
  deadline without executing all its maximal execution time.
  The set of deadline-miss states on $\tau$ is
  $\DeadlineMiss{\tau} \defeq \{S \mid  \exists \tau_i \in\tau : \rct_S(\tau_i) > 0 \wedge \ttd_S(\tau_i) \leq 0 \}$.
\end{definition}

\begin{definition}[Scheduler]\label{run}
  A uniprocessor \emphindef{memoryless scheduler} for $\tau$ is a function $\schedule: \States{\tau}
  \mapsto \tau \cup \{\bot\}$ s.t. $\schedule(S) \in \Active{S}$ or $\schedule(S) = \bot$ when no task is to be scheduled.
  Moreover, we say that the scheduler $\schedule$ is \emphindef{deterministic}, iff for all $S_1, S_2 \in \States{\tau}$ s.t.
      $\Active{S_1} = \Active{S_2}$ and $\crit_{S_1} = \crit_{S_2}$, for all $\tau_i \in
      \Active{S_1}$ : $\nat_{S_1}(\tau_i) = \nat_{S_2}(\tau_i) \wedge \rct_{S_1}(\tau_i) =
      \rct_{S_2}(\tau_i)$ implies $\schedule(S_1) = \schedule(S_2)$.
\end{definition}

%
A \emph{memoryless scheduler} is a scheduler that makes decisions based only on the current state, and not on the history of previous states.
The \emph{deterministic} property means that 
the scheduler always takes the same decision given the same criticality level and the same active task characteristics.
It implies that the decisions of a deterministic scheduler do not involve
any randomness in the job selection and are unaffected by the inactive tasks.
In this work, we will only consider deterministic memoryless schedulers.
As an example, we define the EDF-VD scheduler~\cite{BaruahBDMSS11} within our framework:
\begin{definition}[EDF-VD scheduler]\label{edfvd}
  With 
  $\lambda \defeq \frac{U_\hi^\lo(\tau)}{1-U_\lo^\lo(\tau)}$
  a discount
  factor, let $\ttvd_S(\tau_i)$ be the time remaining before the
  virtual deadline of the last submitted job of $\tau_i \in \Active{S}$ in state $S$
  defined as follows:
    \begin{numcases} {\ttvd_S(\tau_i)=}
        \nat_S(\tau_i)-(T_i-D_i \cdot \lambda)    & \textnormal{if} $L_i = \hi$ \nonumber \\
        \ttd_S(\tau_i)                            & \textnormal{otherwise.} \nonumber
    \end{numcases}

  Further, we let $\MinD(S)$ be the task $\tau_i\in\Active{S}$ which has the minimal
  deadline in $S$.
  That is, $\tau_i$ is s.t. for all $\tau_k\in \Active{S}\setminus\{\tau_i\}$:
  $\ttd_S(\tau_k) > \ttd_S(\tau_i)$ or $\ttd_S(\tau_k) = \ttd_S(\tau_i) \wedge k > i$.
  We define $\MinVD(S)$ similarly, using virtual deadlines instead of deadlines
  (substituting $\ttvd_S$ for $\ttd_S$ in the definition).
  Then, for all states $S$, we let: $\edfvd(S)$
  \begin{numcases}
    {=}
      \bot &  \textnormal{if} $\Active{S}=\emptyset$ \nonumber \\
      \MinD(S)  &   \textnormal{else if} $\crit_S = \hi \vee U_\lo^\lo(\tau) + U_\hi^\hi(\tau) \leq 1$ \nonumber \\
      \MinVD(S) &  \textnormal{otherwise.} \nonumber
  \end{numcases}
\end{definition}

In the two first cases, the scheduler is behaving exactly as EDF.
The original work of EDF-VD~\cite{BaruahBDMSS11} assumes implicit deadlines (\ie $\forall \tau_i: D_i = T_i)$.
Notice we cannot have $\lambda > 1$
as it would mean $U^{\lo} > 1$,
the task set not being schedulable in $\lo$ mode, contradicting the requisites outlined in Section 2 (\emph{``due diligence''}).

Thanks to these notions, we can define the transitions of the automaton.
Those transitions must embed all the modifications happening to the system state within one clock-tick.
Each of those modifications can be seen as an intermediary transition between two system states, and they happen in the following order:
(1) \textit{Release transitions} model the release of jobs by sporadic tasks at a given instant in time,
(2) \textit{Run transitions} model the elapsing of one time unit, and running the job selected by the scheduler, if any
and (3) \textit{Signal transitions} model a job signalling, or not, completion, and a potential resulting mode change.

In the automaton, an \emph{actual transition} will exist only if those three intermediary transitions happen one after the other.
We now formally define how each kind of intermediary transitions alter the state.
We start by defining \emph{release transitions}.
Let $S$ be a state in $\States{\tau}$.
Intuitively, when the system is in state $S$,
a task $\tau_i$ releasing a new job has the effect to update $S$ by
setting $\nat(\tau_i)$ to $T_i$ and $\rct(\tau_i)$ to $C_i(\crit_S)$.
Formally:

\begin{definition}[Release transition]\label{treq}
  Let $S = \systemstate \in \States{\tau}$ be a system state and $\tau^\sreq \subseteq
  \Eligible{S}$ be a set of tasks that are eligible to release a new job in the system.
  Then, we say that $S^\sreq = \systemstate[\sreq]$ is a \emphindef{$\tau^\sreq$-release
  successor} of $S$, denoted $S \RqTrans{\tau^\sreq} S^\sreq$ iff:

  \begin{enumerate}
    \item $\forall \tau_i \in \tau^\sreq : \nat_{S^\sreq}(\tau_i) = T_i$ and
      $\rct_{S^\sreq}(\tau_i)=C_i(\crit_S)$
    \item $\forall \tau_i \notin \tau^\sreq : \nat_{S^\sreq}(\tau_i) = \nat_S(\tau_i)$ and
      $\rct_{S^\sreq}(\tau_i)=\rct_S(\tau_i)$
    \item $\crit_{S^\sreq} = \crit_S$.
  \end{enumerate}
\end{definition}

Notice that we allow $\tau^\sreq = \emptyset$,
that is, no task releases a new job in the system.
Also note that no time elapsed in that transition, so no $\nat(\tau_i)$ nor $\rct(\tau_i)$ must be updated.
Furthermore, observe that changing the definition with $\tau^\sreq = \Eligible{S}$ would lead to consider a periodic task model (without offset),
as all tasks that could release a job would then have to release it immediately.
Next, we move to \emph{run transitions}.
Let $S$ be a state in $\States{\tau}$, and let $\run$ be the scheduling decision to apply, \ie either $\run \in \tau$ and $\run$ must be executed or $\run = \bot$ and the processor remains idle.
Then, letting one time unit elapse from $S$ under the $\run$ scheduling decision
leads to decrementing the $\rct$ of the task $\run$ (and only this task) if $\run \in \tau$, and to
decrementing the $\nat$ of all tasks.
Formally:

\begin{definition}[Run transition]\label{texec}
  Let $S = \systemstate \in \States{\tau}$ be a system state and $\run \in \tau \cup \{\bot\}$ be a task to be executed or $\bot$ if no task is to be executed.
  Then, we say that $S' = \systemstate[\srun]$ is a \emphindef{run successor} of $S$
  under $\run$, denoted $S \CtTrans{\run} S^\srun$ iff:
  \begin{enumerate}
    \item For all $\tau_j\in\tau\setminus\{\run\}$: $\rct_{S^\srun}(\tau_j) = \rct_S(\tau_j)$ and
      $\run \neq \bot$ implies $\rct_{S^\srun}(\run) = \rct_S(\run)-1$
    \item $\forall \tau_i \in \tau : \nat_{S^\srun}(\tau_i) = \max\{\nat_S(\tau_i)-1, 0\}$
    \item $\crit_{S^\srun} = \crit_{S}$.
  \end{enumerate}
\end{definition}

Finally, let us define \emph{signal transitions}.
When the system is in state $S \in \States{\tau}$, there are typically two possible scenarios for the task $\tau_r$ which has just been executed: it can signal completion, or not.
We call the combination of this information a \emph{setup}: $S, \tau_r$ and whether $\tau_r$ signals completion.
This non-deterministic behaviour leads to three different outcomes for the state $S$.
Outcome (1) is when $S$ does not change at all, which can result from different
setups.
Either no task has been executed;
or $\tau_r$ must signal completion because it has exhausted all its execution time budget for its worst criticality mode, i.e. $\tau_r$ is \emph{implicitly} completed;
or the task $\tau_r$ does not signal completion without having exhausted all its execution time budget;
or the task $\tau_r$ signals completion and has exhausted all its execution time budget for the current criticality mode, which is not its worst one.
Outcome (2) is when $\tau_r$ signals completion with remaining execution time budget for the current criticality mode, i.e. $\tau_r$ is \emph{explicitly} completed.
Outcome (3) happens when $\tau_r$ does not signal completion,
has exhausted its execution time budget for the current criticality mode but still has execution time budget in the higher ($\hi$) criticality
mode, which triggers a mode change.
When a task $\tau_r$ signals completion explicitly, then the resulting state is identical to $S$ except for the $\rct$ of $\tau_r$, which is set to $0$.
Such state is formalised as $\sigCmp{S}{\tau_r}$ where $\sigCmp{S}{\tau_r}$ is the state $S'$ s.t.
: $\nat_S=\nat_{S'}$; $\crit_S=\crit_{S'}$; $\rct_{S'}(\tau_r)=0$; and, for all
$\tau_i\in\tau\setminus\{\tau_r\}$: $\rct_{S'}(\tau_i)=\rct_S(\tau_i)$.
When a mode change occurs, $\crit$ becomes $\hi$ and all the active
$\hi$-critical tasks see their $\rct$ increase by the difference between their
execution time in levels $\hi$ and $\lo$.
All $\lo$-critical tasks are discarded, having $\rct$ set to $0$.
The state obtained from $S$ when $\tau_r$ triggers a mode change is thus denoted as
$\critUp{S}{\tau_r}$. Hence, $\critUp{S}{\tau_r}$ is formalised as the state $S'$ s.t. :
$\nat_S=\nat_{S'}$; $\crit_{S'}=\hi$; and, for
all $\tau_i \in \tau:\rct_{S'}(\tau_i) $
  \begin{numcases}
    {=}
        \rct_S(\tau_i)+ C_i(\hi)-C_i(\lo) & if $L_i = \hi \wedge \rct_S(\tau_i) > 0$ \nonumber \\
         C_i(\hi)-C_i(\lo) & else if $\tau_i = \tau_r$ \nonumber \\
        0 & otherwise. \nonumber
  \end{numcases}

Thanks to these two functions, we can now define formally signal transitions.

\begin{definition}[Signal transition]\label{tcom}
  Let $\ran \in \tau \cup \{\bot\}$ be the task that has just been executed if any, $\bot$ otherwise.
  Let $\theta \in \{\true,\false\}$ denote whether $\ran$ signals completion explicitly.
  Then, we say that $S^\scom = \systemstate[\scom]$ is a \emphindef{signal successor} of
  $S$ under $\ran$, denoted $S \CpTrans{\ran,\theta} S^\scom$ iff:
  \begin{numcases}
  {S^\scom = }
        S & $\begin{multlined}[c]
            \mathrm{if}\ \ran \in \Completed{S} \cup \lbrace \bot \rbrace \\
            \vee \left(\neg\theta \wedge \rct_S(\ran) > 0\right) \\
            \vee \left(\theta \wedge \rct_S(\ran) = 0\right)
        \end{multlined}$ \\
        \sigCmp{S}{\ran} & $\begin{multlined}[c]
            \mathrm{if}\ \ran \notin \Completed{S} \cup \lbrace \bot \rbrace \\
            \wedge \theta \wedge \rct_S(\ran) > 0\text{.}
        \end{multlined}$ \\
        \critUp{S}{\ran} & $\begin{multlined}[c]
            \mathrm{if}\ \ran \notin \Completed{S} \cup \lbrace \bot \rbrace \\
            \wedge \neg\theta \wedge \rct_S(\ran) = 0\text{.}
        \end{multlined}$ 
  \end{numcases}
\end{definition}

The cases' number match the outcomes previously described.
Note that $S$ always has itself has a signal successor.
Then, when $\ran \not\in \Completed{S} \cup \lbrace \bot \rbrace$, $S$ has another signal successor which is either
$\sigCmp{S}{\ran}$ or $\critUp{S}{\ran}$ depending on the setup.
%
%

Finally, we define the automaton $A(\tau, \schedule)$ that formalises the behaviour of
the system of dual-criticality sporadic task set $\tau$, when executed under a scheduling algorithm $\schedule$:

\begin{definition}\label{auto}
  Given a system of dual-criticality sporadic tasks $\tau$ and a scheduler $\schedule$, the
  automaton $A(\tau,\schedule)$ is the tuple $\langle V, E, v_0, F \rangle$ where:

  \begin{enumerate}

    \item $V=\States{\tau}$
    \item $(S_1,S_2) \in E$, iff there are $S^\sreq, S^\srun \in \States{\tau}$,
      $\tau^\sreq \subseteq \tau$ and $\theta \in \lbrace\true,\false\rbrace$ \sth{}:
        $S_1 \RqTrans{\tau^\sreq} S^\sreq
        \CtTrans{\schedule(S^\sreq)} S^\srun \CpTrans{\schedule(S^\sreq),\theta} S_2$
    \item $v_0 = \langle \rct_{v_0}, \nat_{v_0}, \crit_{v_0}\rangle$ where $\crit_{v_0} =
      \lo$ and $\forall \tau_i \in \tau: \nat_{v_0}(\tau_i) = \rct_{v_0}(\tau_i) = 0$
    \item $F = \DeadlineMiss{\tau}$

  \end{enumerate}

\end{definition}

As we assumed that $\schedule$ must be deterministic, $\run$ and $\ran$, equalling both to $\schedule(S^\sreq)$, will be the same, as intended by the definition of the automaton.
Each possible execution of the task set corresponds to a path in $A(\tau, \schedule)$ and vice
versa.
States in $\DeadlineMiss{\tau}$ correspond to states of the system where a deadline is or
has been missed. Hence, the set of dual-criticality sporadic tasks $\tau$ is feasible under
scheduler $\schedule$ iff $A(\tau,\schedule)$ is \textit{safe}, \ie $\DeadlineMiss{\tau}$ is not
reachable in $A(\tau,\schedule)$.
\begin{figure}
  \centering
    \includegraphics[width=0.9\columnwidth]{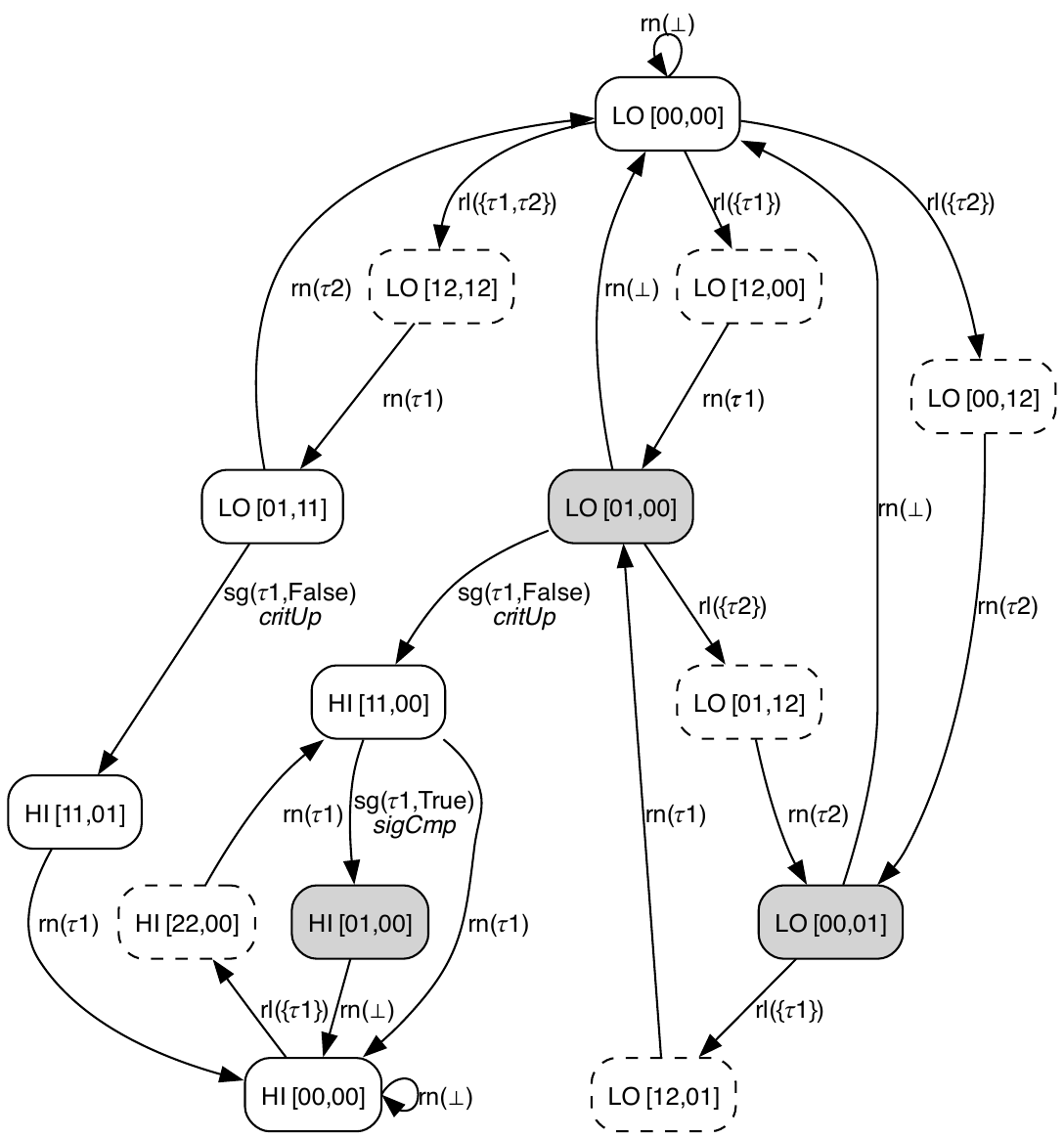}
    \caption{$A(\tau^{a}, \edfvd)$ developed automaton with intermediary transitions. Intermediary states have a dashed outline, and greyed out states are simulated.}
    \label{fig:automaton}
\end{figure}
\autoref{fig:automaton} illustrates such an automaton,
representing the possible execution of a task set scheduled with the EDF-VD scheduler.
In this example, the automaton depicts the dual-criticality sporadic task set $\tau^{a}$ as specified in~\autoref{sec:problemdef}.
System states are represented by nodes.
For the purpose of saving space, we represent a state $S$ with the $\chi[\alpha\beta,
\gamma\delta]$ format, meaning $\crit_S=\chi$, $\rct_S(\tau_1) = \alpha$, $\nat_S(\tau_1) =
\beta$, $\rct_S(\tau_2) = \gamma$ and $\nat_S(\tau_2) = \delta$.
We explicitly represent run transitions by edges labelled with $\mathsf{rn}$, signal
transitions by edges labelled with $\mathsf{sg}$, and release transitions by edges labelled with
$\mathsf{rl}$.
The $\tau^\sreq=\emptyset$ release loops and $S^\scom=S$ signal
loops are omitted for readability.
Notice that the graph in \autoref{fig:automaton} is a developed way of
representing $A(\tau^{a}, \edfvd)$ as it is built with intermediary
transitions.  On the other hand, \autoref{fig-full-automaton} shows a
fully developed automaton without intermediary transition for the task
set $\tau^a$ defined in \autoref{sec:problemdef}.

\begin{figure}
    \centering
    \includegraphics[width=0.4\columnwidth]{./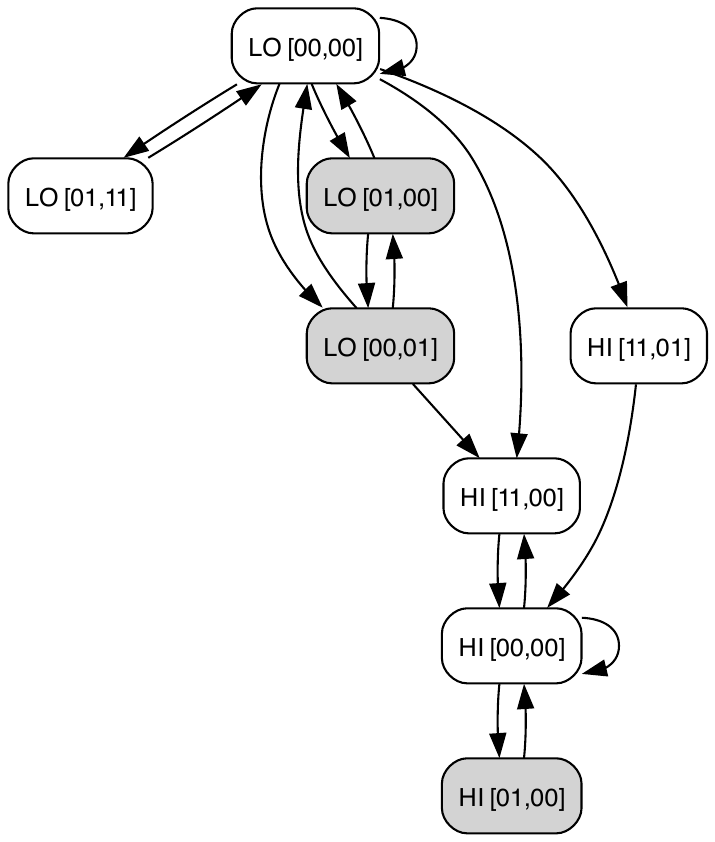}
    \caption{Example of a fully developed automaton, without intermediary transitions. Greyed out states are simulated.}
    \label{fig-full-automaton}
\end{figure}

We can observe different behaviour from the system.
There might be several release transitions from a single state, as it can be seen from the initial node $\lo [00,00]$.
Indeed, four combinations of releases are possible.
When $\tau^\sreq = \emptyset$, there is a release transition from the state to itself, which is omitted on the figure.
The signal transition from the state $\lo [01,11]$ contains a mode change which leads to a $\hi$-criticality system state and the $\lo$-criticality task $\tau_2$ is discarded.
Finally, in the signal transition from the state $\hi [11,00]$, the task $\tau_1$ signals completion explicitly before exhausting all its execution time budget $C_i(\hi) = 2$, and its $\rct$ is set to $0$.
Per the automaton definition, a signal transition exists only when it is preceded by a run transition, emanating from $\hi [22,00]$ in this case.
Thus, the first time that the state $\hi [11,00]$ was visited via a signal transition from $\lo [01,00]$, the signal transition to $\hi [01,00]$ was not yet generated.

\section{An efficient algorithm for safety with simulation relation and oracles}\label{sec:an-effic-algor}

In this section, we present an efficient algorithm to check for safety in an automaton,
which is based on two hypothesis:
\textit{(i)} we can query an \emph{oracle} which can sometimes tell us whether a given state can
or cannot reach the set of failure states $F$;
\textit{(ii)} we have at our disposal a \emph{simulation relation} between states. Intuitively,
when $v'$ simulates $v$, it means that all paths that can occur from $v$
can be witnessed by a path from $v'$. In particular, when $v$ can reach the
failure states, then $v'$ can reach them too.

Both the oracle and the simulation relation can improve the performance of an algorithm answering the
safety problem.
Whenever the oracle guarantees that $v$ cannot reach the failure states, there is no need
to compute the paths starting in $v$ (and it is safe to do so, as no path to the failure
states will be dropped).
When the oracle says that a state $v$ (that is reachable from $v_0$) can reach the failure
states $F$, we can immediately conclude that $F$ is reachable from $v_0$ as well, hence
that the automaton is \emph{unsafe}.
Finally, when two reachable states $v$ and $v'$ are found \sth{} $v'$ simulates $v$, only
the paths starting from $v'$ need to be computed.

We show in \autoref{sec:simul-relat-oracl} how such an oracle and such a simulation
relation can be derived in our mixed criticality setting, but the generality of the
concepts we explain here make them applicable to other scheduling problems.
We formalise the notion of oracle:

\begin{definition}[Safe and unsafe states]\label{def:safe-unsafe-states}
  Let $A = \langle V, E, v_0, F \rangle$ be a finite automaton.
  Then, a state $v$ is \emphindef{safe} iff there is no path starting in $v$ that reaches $F$,
  \ie{} $\Reach{v} \cap F = \emptyset$.

  A state $v$ that is not safe (\ie{} there is a path from $v$ that reaches $F$)
  is called unsafe.

\end{definition}

In order to formalise the oracle, we assume that we have at our disposal a set $\Safe$
containing only safe states and a set $\Unsafe$ containing only unsafe states.
Without loss of generality, we assume that $F\subseteq\Unsafe$.
Observe that we can have $\Safe=\emptyset$ and $\Unsafe=F$, so we do not request that
$\Safe\cup\Unsafe=V$.
However, the bigger those sets, the more efficient our algorithm will potentially be.
Next, we define the notion of \emph{simulation relation}.

\begin{definition}[Simulation relation]\label{def:simulation}
  Let $A = \langle V, E, v_0, F \rangle$ be a finite automaton.
  A relation $\simrel\subseteq V\times V$ is a \emphindef{simulation relation} iff it
  respects the following properties:

  \begin{enumerate}

    \item $\simrel$ is a preorder, \ie it is reflexive and transitive.
    \item for all states $v_1$ and $v_1'$ \sth{} $v_1\simrel v_1'$, for all $v_2$
      \sth{} $(v_1,v_2)\in E$, there exists $v_2'\in V$ \sth{}:
        \textit{(i)} $(v_1',v_2')\in E$;
        \textit{(ii)}  and $v_2\simrel v_2'$.

    \item for all states $v$ and $v'$ \sth{} $v\simrel v'$: $v\in F$ implies that
      $v'\in F$.

  \end{enumerate}

\end{definition}

Whenever $v_1\simrel v_1'$, we say that $v_1'$ \emph{simulates} $v_1$.
Thus, the definition says that, for every move $(v_1,v_2)$ that can be performed from
$v_1$, there is a matching move $(v_1',v_2')$ from $v_1'$.
Here, `matching' means that the state $v_2'$ simulates $v_2$
($v_2\simrel v_2'$) and that $v_2'$ is a failure state if $v_2$ is.
From this definition, we establish the following:

\begin{proposition}\label{prop:simulate-safe}
  Let $A = \langle V, E, v_0, F \rangle$ be a finite automaton.
  For all pairs of states $v$ and $v'$ \sth{} $v\simrel v'$, the following holds:
  (1) if $v$ is unsafe, then $v'$ is unsafe too;
  (2) if $v'$ is safe, then $v$ is safe too.
\end{proposition}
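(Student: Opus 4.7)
The plan is to prove part (1) by induction on the length of a witnessing path to a failure state, and then derive part (2) as a contrapositive.

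More precisely, suppose $v \simrel v'$ and $v$ is unsafe. Then by \autoref{def:safe-unsafe-states}, there exists a path $v = v_1, v_2, \ldots, v_k$ in $A$ with $v_k \in F$. I would prove by induction on $k \geq 1$ the following strengthened statement: for every path $v_1, v_2, \ldots, v_k$ with $v_k \in F$ and every $v_1'$ with $v_1 \simrel v_1'$, there exists a path $v_1', v_2', \ldots, v_k'$ in $A$ such that $v_i \simrel v_i'$ for every $i$ and in particular $v_k' \in F$. The base case $k = 1$ reduces to a single application of property 3 of \autoref{def:simulation}: since $v_1 \in F$ and $v_1 \simrel v_1'$, we get $v_1' \in F$ directly, and the singleton path $v_1'$ works.

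For the inductive step, given $(v_1, v_2) \in E$ and $v_1 \simrel v_1'$, property 2 of \autoref{def:simulation} supplies a $v_2'$ with $(v_1', v_2') \in E$ and $v_2 \simrel v_2'$. The induction hypothesis, applied to the shorter path $v_2, \ldots, v_k$ and to $v_2'$, then yields a path $v_2', \ldots, v_k'$ with $v_k' \in F$ and $v_i \simrel v_i'$ throughout. Prepending $v_1'$ gives the desired path from $v_1'$ to a failure state, so $v_1'$ is unsafe. Applied to the outer setting, this shows that $v'$ is unsafe, establishing (1).

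For (2), the contrapositive of (1) gives exactly what is needed: if $v'$ is safe and $v \simrel v'$, then $v$ must be safe, because otherwise part (1) would force $v'$ to be unsafe, contradicting the hypothesis. I do not expect a real obstacle here; the only subtlety is making sure to carry the simulation $v_i \simrel v_i'$ along the whole path inductively rather than proving the weaker statement that merely preserves reachability of $F$, since one needs the matching simulation at every intermediate step to invoke property 2 again.
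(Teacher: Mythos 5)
Your proof is correct and follows essentially the same route as the paper's: inductively lifting the path from $v$ to a simulating path from $v'$ via property 2 of \autoref{def:simulation}, invoking property 3 to conclude the final state lies in $F$, and obtaining (2) as the contrapositive of (1). The strengthened induction hypothesis you state explicitly is exactly what the paper's phrase ``applying inductively \autoref{def:simulation}'' implicitly relies on.
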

\begin{proof}
  First, let us assume $v$ is unsafe.
  From \autoref{def:safe-unsafe-states}, it means that there exists a path
  $v,v_1,v_2,\ldots, v_k$ \sth{} $v_k\in F$.
  By applying inductively \autoref{def:simulation}, we can build a path
  $v',v_1',v_2'\ldots,v_k'$ \sth{} $v_i\simrel v_i'$ for all $1\leq i\leq k$.
  By \autoref{def:simulation} again, and since $v_k\in F$, we conclude that $v_k'\in
  F$ too.
  Hence, $v'$ is unsafe.
  This proves point~1.
  Point~2 is simply its contrapositive.
\end{proof}

This observation prompts for the definition of the set of `all states that are simulated
by/simulate a given state $v$'.
Given a state $v$ and a simulation relation $\simrel$, we let the
\emph{upward-closure} of $v$ be the set of all states that simulate $v$, \ie
$\uc{v}=\{v'\mid v\simrel v'\}$; and its \emph{downward-closure} be $\dc{v}=\{v'\mid
v'\simrel v\}$ (the set of all states that are simulated by $v$).
The upward and downward-closures are also defined on a set of states, for which it becomes the union of the respective closures of all the states within the set.
Finally, given a set of states $Y\subseteq V$, we let $\Max{Y}=\{v\in S\mid \nexists v'\in
S: v\simrel v'\}$ be the set of \emph{maximal} states of $S$ w.r.t. the simulation
$\simrel$, \ie those states in $S$ that are not simulated by any other state in $S$.

Let us now explain how these notions (sets of safe and unsafe states, and simulation
relation) can be exploited in a generic algorithm to solve the safety problem in a finite
automaton.
First, let us recall the general strategy of \emph{breadth first search} to solve safety in
an automaton.
It can be formalised as computing two sequences of sets $\left(R_i\right)_{i\geq 0}$ and
$\left(N_i\right)_{i\geq 0}$ \sth{}, for all $i\geq 0$, $R_i$ and $N_i$ are the sets of all
states that can be reached from $v_0$ in at most $i$ steps and exactly $i$ steps,
respectively.
We denote by
$\Succ{Y} = \lbrace v' | \exists (v, v') \in E  \, \forall v \in Y \rbrace$
the set of all successors of all states in $Y \subseteq V$,
and we have
  $N_0 = \{v_0\}$, $\forall i\geq 0: N_{i+1} = \Succ{N_i}\setminus R_i$,
  $R_0 = \{v_0\}$, and $\forall i\geq 0: R_{i+1} = R_i\cup N_{i+1}$.
Then, an algorithm to decide reachability consists in computing $N_1,R_1,N_2,R_2,\ldots
N_j,R_j,\ldots$ until:
\textit{(i}) either $N_j\cap F\neq\emptyset$, in which case $F$ is reachable from $v_0$; or
\textit{(ii}) $R_j=R_{j+1}$, in which case $R_j$ contains all the reachable states and we know
    that $F$ is not reachable (otherwise, we would have already returned `fail').
In order to improve this algorithm, we compute the sequences
$\left(\Rtilde_i\right)_{i\geq 0}$ and $\left(\Ntilde_i\right)_{i\geq 0}$:
%
%
\begin{align*}
  \Ntilde_0 = \Rtilde_0 = & \{v_0\}\setminus\dc{\Safe} \\
  \forall i\geq 0: \Ntilde_{i+1} = & \Max{\Succ{\Ntilde_i}\setminus
  \dc{\Rtilde_i\cup \Safe}} \\
  \forall i\geq 0: \Rtilde_{i+1} = & \Max{\Rtilde_i\cup \Ntilde_{i+1}}.
\end{align*}

Intuitively, these sequences contain less elements than the original ones, but still retain enough information to solve the safety problem. Let us explain why. Assume the algorithm has computed set $\Rtilde_i$ and $\Ntilde_i$, 
and let us consider how it computes $\Ntilde_{i+1}$. First, the algorithm computes the successors
of $\Ntilde_i$, then computes 
$\Succ{\Ntilde_i}\setminus \dc{\Rtilde_i\cup \Safe}$, i.e. it removes, from the successors 
of $\Ntilde_i$, all the elements that are simulated by an element from $\Rtilde_i$ or from $\Safe$. 
Finally, it keeps only the maximal elements from this resulting set.

To understand why these optimisations are correct, assume there is a state $v\in \Succ{\Ntilde_i}$ which
is also simulated by an element of $\Safe$ (hence $v\in\dc{\Safe}$).
With our optimisations, $v$ is not in $\Ntilde_{i+1}$, and its successors will 
not be computed at the next step.
However, this is not a problem, because $v$ is simulated by a safe state, hence it is
\emph{safe too} by \autoref{prop:simulate-safe}.
Similarly, if $v\in\dc{\Rtilde_i}$, it means we have already computed all the necessary successors at a previous step, keeping only those that are potentially unsafe.
So, it is correct to avoid computing the successors of $v$, since we are looking for paths
that lead to \emph{unsafe} states.
Now, assume that there is $v\in\Succ{\Ntilde_i}\setminus \dc{\Rtilde_i\cup \Safe}$ which is not $\simrel$-maximal.
This implies that there is $v'\in \Ntilde_{i+1}$ that simulates $v$.
There are two possibilities. Either $v$ is \emph{safe}, and we do not need to compute its
successors.
Or $v$ is \emph{unsafe}, but then, so is $v'$.
So, again, we keep in $\Ntilde_{i+1}$ enough information to discover a path to the failure
states. With these definitions, all sets $\Rtilde_i$ and $\Ntilde_i$ are \emph{$\simrel$-antichains of
$\simrel$-maximal states} (\ie sets of states which are all incomparable and maximal
w.r.t. $\simrel$) that retain enough information to find paths to $F$.

Based on these sequences, we propose \autoref{acbfs-safe} to check for safety in a
given finite automaton $A = \langle V, E, v_0, F \rangle$, when we have at our disposal a simulation
relation $\simrel$ on $V$, and two sets $\Safe$ and $\Unsafe\supseteq F$ of safe and
unsafe states respectively, provided to us by an Oracle.
This algorithm consists in computing $\Ntilde_0, \Rtilde_0, \Ntilde_1, \Rtilde_1,\ldots$
until either
\textit{(i}) $\Ntilde_i\cap \uc{\Unsafe}\neq \emptyset$, in which case we know that $F$ is
reachable; or
\textit{(ii}) $\Ntilde_i=\emptyset$, in which case we have explored enough states to guarantee
that $F$ is not reachable.


\begin{algorithm}
  \begin{algorithmic}[1] 
    \State $i \leftarrow 0$
    \State $\Rtilde_0 \leftarrow \{v_0\}\setminus\dc{\Safe}$
    \State $\Ntilde_0 \leftarrow \{v_0\}\setminus\dc{\Safe}$
    \Repeat
      \If{$\Ntilde_i\cap\uc{\Unsafe}\neq\emptyset$}
        \State \Return \texttt{Fail}
      \EndIf
      \State $\Ntilde_{i+1}\leftarrow \Max{\Succ{\Ntilde_i}\setminus
          \dc{\Rtilde_i\cup \Safe}}$
      \State $\Rtilde_{i+1}\leftarrow \Max{\Rtilde_i\cup \Ntilde_{i+1}}$ 
      \State $i \leftarrow i+1$
    \Until{$\Ntilde_i=\emptyset$}
    \State \Return \texttt{Safe}
  \end{algorithmic}
  \caption{Antichain breadth first search with safe and unsafe states optimisation}\label{acbfs-safe}
\end{algorithm}


\begin{proposition}\label{prop:algo-correct-and-terminates}
  On all automata $A$, \autoref{acbfs-safe} terminates and returns `\texttt{Fail}'
  iff $A$ is unsafe.
\end{proposition}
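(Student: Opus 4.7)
The plan is to separately verify three claims: termination, soundness (\texttt{Fail} returned implies $A$ unsafe), and completeness ($A$ unsafe implies \texttt{Fail} returned). For termination, I would show that the downward closure $\dc{\Rtilde_i}$ strictly increases as long as $\Ntilde_{i+1} \neq \emptyset$. Each element of $\Ntilde_{i+1}$ lies outside $\dc{\Rtilde_i \cup \Safe}$ by construction, yet belongs to $\dc{\Rtilde_{i+1}}$ since it sits below some maximal element of $\Rtilde_i \cup \Ntilde_{i+1}$; meanwhile $\dc{\Rtilde_i} \subseteq \dc{\Rtilde_{i+1}}$ because every element of $\Rtilde_i$ either survives into $\Rtilde_{i+1}$ or is absorbed by a dominator that ends up there. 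Finiteness of $V$ then forces the chain to stabilise, making the loop terminate.

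For soundness, I would first observe, by a routine induction on $i$, that every element of $\Ntilde_i$ is reachable from $v_0$: the seed $\Ntilde_0 \subseteq \{v_0\}$ is trivially reachable, and $\Ntilde_{i+1} \subseteq \Succ{\Ntilde_i}$. If the algorithm returns \texttt{Fail} at iteration $i$, then there exist $v \in \Ntilde_i$ and $u \in \Unsafe$ with $u \simrel v$. By \autoref{prop:simulate-safe}(1), $v$ is unsafe because $u$ is, and by the reachability observation, $v$ is reachable from $v_0$. Concatenating the path from $v_0$ to $v$ with a path from $v$ into $F$ (which exists because $v$ is unsafe) shows that $A$ is unsafe.

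Completeness is where the main difficulty lies. Assuming $A$ unsafe, I fix a witnessing path $v_0, v_1, \ldots, v_k$ with $v_k \in F \subseteq \Unsafe$; each $v_j$ on this path is unsafe. I would prove by induction on $j$ that there exist an index $\ell_j$ and a state $v'_j \in \Ntilde_{\ell_j}$ with $v_j \simrel v'_j$. The base case exploits that an unsafe state cannot lie in $\dc{\Safe}$ (otherwise \autoref{prop:simulate-safe} would make it safe), so $v_0 \in \Ntilde_0$. For the inductive step, clause~2 of \autoref{def:simulation} supplies a successor $v''$ of $v'_j$ with $v_{j+1} \simrel v''$; in particular $v''$ is unsafe. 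The case $v'' \in \dc{\Safe}$ is impossible by unsafety; if $v''$ avoids $\dc{\Rtilde_{\ell_j} \cup \Safe}$, then $v''$ is either itself in $\Ntilde_{\ell_j+1}$ (if maximal) or dominated by a maximal element of this set, supplying $v'_{j+1}$ directly. The main obstacle is the remaining case $v'' \in \dc{\Rtilde_{\ell_j}}$, where $v''$ never enters any $\Ntilde_\ell$: to handle it I would invoke an auxiliary lemma, proved by a trivial induction from $\Rtilde_{i+1} \subseteq \Rtilde_i \cup \Ntilde_{i+1}$, stating that every element of $\Rtilde_i$ originated in some $\Ntilde_k$ with $k \leq i$; the dominator of $v''$ in $\Rtilde_{\ell_j}$ thus arose in some $\Ntilde_k$ and provides $v'_{j+1}$ via transitivity. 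Applying the invariant at $j = k$ yields $v'_k \in \Ntilde_{\ell_k}$ simulating $v_k \in \Unsafe$, whence $v'_k \in \Ntilde_{\ell_k} \cap \uc{\Unsafe}$; since $\Ntilde_{\ell_k}$ was actually computed (otherwise it could not contain $v'_k$), the algorithm must have returned \texttt{Fail} at or before iteration $\ell_k$, contradicting the assumption that it did not.
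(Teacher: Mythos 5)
Your proposal is correct, and it follows the paper's overall decomposition (termination, soundness, completeness), but the internals differ enough to be worth comparing. For termination, you argue that $\dc{\Rtilde_i}$ grows \emph{strictly} whenever $\Ntilde_{i+1}\neq\emptyset$ and invoke finiteness of $V$; the paper instead shows $\dc{\Rtilde_i}\subseteq\dc{\Rtilde_{i+1}}\subseteq\dc{\Reach{A}}$, lets the chain stabilise, and only then deduces $\Ntilde_{\ell+1}=\emptyset$ --- your version is more direct and does not need the reachability bound. Soundness is the same argument in both. For completeness, the paper's key lemma is position-indexed: for the prefix of the witnessing path that avoids $\dc{\Safe}$ and $\uc{\Unsafe}$, it proves $v_i\in\dc{\Rtilde_i}$, and then finishes with a separate one-step simulation argument at the first state in $\uc{\Unsafe}$. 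Your invariant is different: each path state $v_j$ is simulated by some member of some $\Ntilde_{\ell_j}$, with a floating index, and you exclude $\dc{\Safe}$ by observing that every state on the path is unsafe (via \autoref{prop:simulate-safe}), which lets you avoid splitting the path at the first $\uc{\Unsafe}$ state; your auxiliary ``every element of $\Rtilde_i$ originated in some $\Ntilde_k$ with $k\leq i$'' is exactly point~2 of the paper's observation lemma. Both routes work; yours merges the paper's lemma and its final step into a single induction at the cost of losing the alignment between path position and iteration index, which the paper gets for free.

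One step you should tighten: the closing claim that ``$\Ntilde_{\ell_k}$ was actually computed, otherwise it could not contain $v'_k$'' conflates the mathematically defined sequence with what the run of the algorithm produces --- the set contains $v'_k$ regardless of how long the loop runs. What you actually need is the paper's point~4: $\Ntilde_i=\emptyset$ implies $\Ntilde_j=\emptyset$ for all $j\geq i$ (since $\Succ{\emptyset}=\emptyset$), so $\Ntilde_{\ell_k}\neq\emptyset$ rules out the loop exiting with \texttt{Safe} at any iteration $i\leq\ell_k$; hence the algorithm either returns \texttt{Fail} earlier or reaches the test $\Ntilde_{\ell_k}\cap\uc{\Unsafe}\neq\emptyset$ and returns \texttt{Fail} there. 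This is a one-line fix, not a flaw in the approach.
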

For the sake of readability, we present here a proof sketch. The
complete proof is in Appendix~\ref{app:proof-algo}

\begin{proof}[Proof Sketch]
  We first establish termination by showing that all states computed in some $\Rtilde_i$ are
  reachable states, \ie $\Rtilde_i\subseteq \Reach{A}$ for all $i\geq 0$.
  We then observe that the sets $\dc{\Rtilde_i}$ keep growing, \ie
  $\dc{\Rtilde_0}\subseteq\dc{\Rtilde_1}\subseteq\cdots\subseteq\dc{\Rtilde_i}\subseteq\cdots$.
  However, since $\dc{\Rtilde_i}\subseteq \dc{\Reach{A}}$ for all $i\geq 0$, and since
  $\dc{\Reach{A}}$ is a finite set, this sequence must eventually stabilise, \ie
  $\dc{\Rtilde_\ell}=\dc{\Rtilde_{\ell+1}}$ for some $\ell$.
  From this, we can conclude that $\Ntilde_{\ell+1}=\emptyset$ and the algorithm
  terminates.

  Next, we prove soundness, \ie when the algorithm returns `\texttt{Fail}', we have indeed
  $\Reach{A}\cap F\neq \emptyset$.
  This stems again from the fact that $\Ntilde_i\subseteq \Reach{A}$ for all $i\geq 0$.
  Hence, when we return `\texttt{Fail}', we have indeed found a reachable state that is
  unsafe.

  Finally, we establish completeness, \ie when $\Reach{A}\cap F\neq \emptyset$, the
  algorithm returns `\texttt{Fail}'.
  To obtain this result, we consider one path $v_0,v_1,\ldots, v_k, v_{k+1}$ \sth{}, $v_{k+1}\in\uc{\Unsafe}$, and $v_i\not\in\uc{\Unsafe}$ for all $0\leq i\leq k$.
  That is, the path reaches $\uc{\Unsafe}$ in its last state only.
  We show that the prefix $v_0,v_1,\ldots, v_k$ can be `found' in the sequence
  $\Rtilde_0,\ldots, \Rtilde_k$ in the following sense: $0\leq i\leq k$:
  $v_i\in\dc{\Rtilde_i}$.
  From that, we conclude that $\Ntilde_{k+1}\cap \uc{\Unsafe}\neq\emptyset$, and that
  $\Ntilde_{k+1}$ will be computed by the algorithm, hence returning `\texttt{Fail}'.
\end{proof}


\section{A simulation relation and oracles for dual-criticality scheduling}
\label{sec:simul-relat-oracl}

In order to leverage the optimised algorithm of \autoref{sec:an-effic-algor} on the
automaton defined in \autoref{sec:automaton-semantic} (to model the behaviour of dual-criticality
task set),
we define in this Section a suitable simulation relation and \textit{ad hoc} oracles.
In \autoref{sec:evaluation}, we will show that these optimisations are efficient in practice to reduce
the size of the state space that needs to be explored.

\subsection{Idle tasks simulation relation}
We first define a simulation relation $\idle$, called the \textit{idle tasks
simulation relation} that can be computed efficiently by inspecting the values 
$\nat$, $\rct$ and $\crit$ stored in the states.

\begin{definition}\label{def:idle}
  Let $\tau$ be a set of dual-criticality sporadic tasks.
  Then, the \emphindef{idle tasks preorder} $\idle \subseteq \States{\tau} \times
  \States{\tau}$ is \sth{} for all $S_1, S_2: S_1 \idle S_2$ iff:
    \textit{(i)} $\crit_{S_2} = \crit_{S_1}$; \textit{(ii)} $\rct_{S_2} = \rct_{S_1}$;
    \textit{(iii)} for all $\tau_i$ \sth{} $\rct_{S_1}(\tau_i) = 0 : \nat_{S_2}(\tau_i) \leq
      \nat_{S_1}(\tau_i)$; and \textit{(iv)} for all $\tau_i$ \sth{} $\rct_{S_1}(\tau_i) > 0 : \nat_{S_2}(\tau_i) =
      \nat_{S_1}(\tau_i)$.
\end{definition}

This relation is transitive and reflexive, so it is indeed a preorder.
The relation also defines a partial order on $\Active{S}$, because it is antisymmetric.
Note that $S_1 \idle S_2$ implies that $\Active{S_2} = \Active{S_1}$ since $\rct_{S_2} = \rct_{S_1}$.
%
%
We show that this preorder is indeed a simulation relation for a
deterministic scheduler:

\begin{theorem}\label{theo:idle-sim}
  Let $\tau$ be a dual-criticality sporadic task system and $\schedule$ a deterministic
  uniprocessor scheduler for $\tau$.
  Then $\idle$ is a simulation relation for $A(\tau, \schedule)$.

\end{theorem}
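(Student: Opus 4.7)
My plan is to verify the three clauses of \autoref{def:simulation} for $\idle$. Clauses (1) and (3) are quick: reflexivity is immediate, transitivity follows because the four conditions in \autoref{def:idle} compose, and for clause~(3), if $S_1 \in \DeadlineMiss{\tau}$ then some $\tau_i$ has $\rct_{S_1}(\tau_i) > 0$ and $\ttd_{S_1}(\tau_i) \leq 0$; since $S_1 \idle S_1'$ gives $\rct_{S_1'} = \rct_{S_1}$ and, by condition~(iv), $\nat_{S_1'}(\tau_i) = \nat_{S_1}(\tau_i)$ for this active task, we immediately get $S_1' \in \DeadlineMiss{\tau}$.

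The main work is clause~(2). I would fix $S_1 \idle S_1'$ and a transition $(S_1,S_2) \in E$, which decomposes into $S_1 \RqTrans{\tau^\sreq} S_1^\sreq \CtTrans{\schedule(S_1^\sreq)} S_1^\srun \CpTrans{\schedule(S_1^\sreq),\theta} S_2$. I would then construct a matching path out of $S_1'$ in three stages, propagating $\idle$ through each.

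\textbf{Release stage.} I would choose the \emph{same} set $\tau^\sreq$ in $S_1'$. Each $\tau_i \in \tau^\sreq$ satisfies $\rct_{S_1}(\tau_i) = \nat_{S_1}(\tau_i) = 0$ and $L_i \geq \crit_{S_1}$; since $\crit_{S_1'}=\crit_{S_1}$, $\rct_{S_1'}=\rct_{S_1}$, and condition~(iii) forces $\nat_{S_1'}(\tau_i) \leq 0$, we get $\tau^\sreq \subseteq \Eligible{S_1'}$. So $S_1' \RqTrans{\tau^\sreq} S_1'^\sreq$ exists. After the release, tasks in $\tau^\sreq$ receive identical $(\rct, \nat)$ values in both states, and tasks outside keep their old values, so I would verify directly that $S_1^\sreq \idle S_1'^\sreq$ by going through the four conditions.

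\textbf{Run stage.} This is where determinism enters. Because $S_1^\sreq \idle S_1'^\sreq$ forces $\Active{S_1^\sreq} = \Active{S_1'^\sreq}$, same criticality, and (by condition~(iv)) the same $\rct$ and $\nat$ values on every active task, \autoref{run} yields $\schedule(S_1^\sreq) = \schedule(S_1'^\sreq) = \run$. I then take the run transition under the same $\run$, yielding $S_1'^\srun$, and check that $S_1^\srun \idle S_1'^\srun$: the $\rct$ update is identical in both states, and the $\nat$ decrement preserves the inequality in condition~(iii) and the equality in condition~(iv). One subtle point I would flag explicitly: a task $\tau_i$ with $\rct_{S_1^\sreq}(\tau_i) > 0$ may become inactive after the run (if $\tau_i = \run$ and $\rct = 1$), but even in that case the $\nat$ values were equal before the decrement and remain so afterwards, so condition~(iii) still holds.

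\textbf{Signal stage.} Since $\rct$ and $\crit$ agree, $\Completed{S_1^\srun} = \Completed{S_1'^\srun}$ and $\rct_{S_1^\srun}(\run) = \rct_{S_1'^\srun}(\run)$, so I can choose the \emph{same} $\theta$ in $S_1'^\srun$ and land in the same branch of \autoref{tcom}. For the self-loop branch, $S_2' := S_1'^\srun$ works directly. For $S_2 = \sigCmp{S_1^\srun}{\run}$, the only change is setting $\rct(\run) = 0$, which I would check preserves $\idle$ using the fact that $\run$ was active so its $\nat$ values are equal in the two states. For $S_2 = \critUp{S_1^\srun}{\run}$, I would note that $\run$ must be $\hi$-critical (since the mode-change precondition $\run \notin \Completed{S_1^\srun}$ with $\rct(\run)=0$ rules out $\lo$ tasks); the $\rct$ update is a pure function of $\rct_{S_1^\srun}$ (equal in both states) and $\run$, so both states receive identical $\rct$; the $\nat$ values are untouched; and again the previously-active task $\run$ has matching $\nat$ in the two states, so condition~(iv) remains intact even for those $\hi$ tasks that stay active after the mode switch.

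The main obstacle I anticipate is bookkeeping in the run and signal stages: specifically, ensuring that tasks whose activity status \emph{changes} across a transition (either $\run$ becoming inactive after its last execution tick, or a task becoming active via mode change) never violate condition~(iv), which requires strict equality of $\nat$ for active tasks. Tracking these edge cases carefully is what turns the argument from plausible to rigorous, and I would make them explicit sub-cases rather than glossing over them.
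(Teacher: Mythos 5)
Your proposal is correct and follows essentially the same route as the paper's proof: decompose the edge into release, run and signal intermediary transitions, match them with the same $\tau^\sreq$ and $\theta$, invoke determinism of $\schedule$ (via equal active sets, criticality, and $\rct$/$\nat$ on active tasks) for the run stage, and verify the deadline-miss clause directly from conditions (ii) and (iv). The edge cases you flag — notably that the scheduled task $\run$ was active in the pre-release state, so its $\nat$ values coincide and condition (iv) survives $\critUp{S}{\run}$ making it active again — are exactly the points the paper's full proof treats as explicit sub-cases.
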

For the sake of readability we present here a proof sketch. The
complete proof can be found in Appendix~\ref{app:proof-simu}.

\begin{proof}[Proof Sketch]
  Let $S_1, S^\scom_1$ and $S_2$ be three states in $\States{\tau}$ s.t. $(S_1,
  S^\scom_1) \in E$ and $S_1 \idle S_2$, let us show that there exists $S^\scom_2 \in
  \States{\tau}$ with $(S_2, S^\scom_2) \in E$ and $S^\scom_1 \idle S^\scom_2$.

  Let $S^\sreq_1$ and $S^\srun_1 \in \States{\tau}$, $\tau^\sreq \subseteq \tau $ and $\theta \in \lbrace\true,\false\rbrace$ be such that : $S_1 \RqTrans{\tau^\sreq} S^\sreq_1
  \CtTrans{\schedule(S^\sreq_1)} S_1^\srun \CpTrans{\schedule(S^\sreq_1),\theta} S_1^\scom$. Those exist by
  \autoref{auto} and since $(S_1, S_1^\scom) \in E$.

  First, observe that, by \autoref{def:idle}, $\Eligible{S_1}\subseteq\Eligible{S_2}$. 
  So, by \autoref{treq}, there exists a $\tau^\sreq$-release transition 
  from $S_2$. Let $S_2^\sreq$ be the state s.t. $S_2 \RqTrans{\tau^\sreq} S_2^\sreq$. 
  By \autoref{treq}, again, it is easy to check that $S_1^\sreq\idle S_2^\sreq$.  
  
    Next, let us show that we can simulate the run transition from $S_2^\sreq$. To alleviate notations, 
    let us denote, from now on, 
    $\schedule(S^\sreq_1)$ by $\run_1$ and $\schedule(S^\sreq_2)$ by $\run_2$. Since the scheduler is deterministic, we have $\run_1=\run_2$,
    by \autoref{run}. Hence, we can let $S_2^\srun$ be the state s.t. $S^\sreq_2
    \CtTrans{\run_2} S_2^\srun$. Finally, by \autoref{def:idle} and~\ref{run}, 
    it is easy to check that $S_1^\srun\idle S_2^\srun$.

 Then, let us show that we can simulate the signal transitions from $S_1^\srun$. Let $S_2^\scom$ be the state s.t. $S^\srun_2
    \CpTrans{\run_2,\theta} S_2^\scom$ and let us observe $S_1^\scom$. By \autoref{tcom}, there are three possible values for $S_1^\scom$ which are all simulated by $S_2^\scom$.
    Indeed, keeping in mind that, $\run_1=\run_2$, $S^\srun_1 \idle S^\srun_2$ is already established.
    Then, it is easy to verify that $\sigCmp{S^\srun_1}{\run_1}\idle\sigCmp{S^\srun_2}{\run_2}$ and $\critUp{S^\srun_1}{\run_1}\idle \critUp{S^\srun_2}{\run_2}$.

  It remains to prove that \emph{if} $S_1 \idle S_2$ and $S_1 \in
  \DeadlineMiss{\tau}$, \emph{then} $S_2 \in \DeadlineMiss{\tau}$ too.
  Let $\tau_i$ be such that $\rct_{S_1}(\tau_i) > 0 \wedge \ttd_{S_1}(\tau_i) \leq 0$.
  Since $S_1 \idle S_2$ : $\rct_{S_2}(\tau_i) = \rct_{S_1}(\tau_i)$ and $\nat_{S_2}(\tau_i)
  \leq \nat_{S_1}(\tau_i)$, thus $\ttd_{S_2}(\tau_i) \leq \ttd_{S_1}(\tau_i)$.
  Hence $\rct_{S_2}(\tau_i) > 0 \wedge \ttd_{S_2}(\tau_i) \leq \ttd_{S_1}(\tau_i)
  \leq 0$ and therefore $S_2 \in \DeadlineMiss{\tau}$ as per~\autoref{fail}.
\end{proof}

\autoref{fig:automaton}, presented in \autoref{sec:automaton-semantic},
illustrates the effect of $\idle$ with \autoref{acbfs-safe}.
If a state $S_2$ has been encountered previously, and we find another state $S_1$ \sth{} 
$S_1 \idle S_2$, then we can avoid exploring $S_1$ and its successors.
However, this does not mean never encountering a successor of $S_1$ as they
may be encountered through other paths, or have been encountered already.
In \autoref{fig:automaton}, grey states can be avoided as they are simulated by another
state: $\hi [01,00] \idle \hi [00,00]$ and, $\lo [00,01]$ and $\lo [01,00]$ are simulated by $\lo [00,00]$.

\subsection{Safe state oracles}

We present a safe state oracle, which is a sufficient schedulability condition depending on the state of the system.


\begin{oracle}[$\hi$ idle point]
$\{S \mid \crit_{S} = \hi \wedge  \Active{S} = \emptyset \} \subseteq \Safe$.
\begin{proof}
As per the \emph{``due diligence''} outlined in \autoref{sec:problemdef}, the task set comprising only $\hi$ tasks, where
$\forall \tau_i: C_i = C_i(\hi)$, must be schedulable. Hence, as of reaching an idle point in $\hi$ mode, no more deadline misses are possible.
\end{proof}
\end{oracle}

\subsection{Unsafe state oracles}

%
In this section, we present several \emph{necessary} conditions depending on the states of the system
and provide their equivalent formulation in terms of unsafe oracles. Note that the oracles presented below are scheduler agnostic.
%

\begin{definition}[Laxity]
    The \emphindef{laxity} of an active task $\tau_i$ in the state $S$ is: $\laxity_S(\tau_i) = \ttd_S(\tau_i) - \rct_S(\tau_i)$.
\end{definition}

\begin{lemma}\label{lemma:nc-poslax}
    For feasibility it is \emph{necessary} to have
    $\forall \tau_i \in \Active{\tau}: \laxity_S(\tau_i) \geq 0$.
\end{lemma}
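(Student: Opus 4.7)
The plan is to prove the contrapositive: if $\laxity_S(\tau_i) < 0$ for some task $\tau_i \in \Active{S}$ in some reachable state $S$, then $S$ is unsafe, \ie there exists a path from $S$ to some state in $\DeadlineMiss{\tau}$.

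First, I unfold the definition: $\laxity_S(\tau_i) < 0$ means $\rct_S(\tau_i) > \ttd_S(\tau_i)$, with $\rct_S(\tau_i) > 0$ by activeness. If $\ttd_S(\tau_i) \leq 0$ already, then $S \in \DeadlineMiss{\tau}$ by \autoref{fail} and we are done. Otherwise, let $k \defeq \ttd_S(\tau_i) > 0$, and I exhibit a concrete path of $k$ clock ticks ending in a deadline-miss state.

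The key invariant to maintain along the constructed path is that $\tau_i$ remains active: its $\rct$ must stay strictly positive, and it must not be discarded by a mode change. I therefore construct the path step by step, resolving the non-determinism in the automaton as follows at every clock tick: (1) take $\tau^\sreq = \emptyset$ in the release transition (no task releases a new job); (2) let the scheduler pick whatever $\schedule$ prescribes in the run transition; (3) in the signal transition, pick $\theta = \true$ whenever the just-executed task $\run$ satisfies $\rct(\run) = 0$ after the run transition and $\run \notin \Completed{\cdot}$, so that the outcome is $S^\scom = S^\srun$ instead of a mode change, and $\theta = \false$ otherwise, so that no job is explicitly completed. These are all valid non-deterministic choices per \autoref{treq}, \autoref{texec} and \autoref{tcom}, so the resulting sequence is a genuine path of $A(\tau, \schedule)$.

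Under this choice, no mode change occurs, so $\tau_i$ is never discarded by a $\lo{}\to\hi$ transition. At every clock tick, $\nat$ of $\tau_i$ strictly decreases by $1$ (since it starts at $\ttd_S(\tau_i) + (T_i - D_i) \geq k \geq 1$ and we apply at most $k$ ticks), so $\ttd$ decreases by exactly $1$ per tick. Meanwhile, since we are on a uniprocessor, $\rct$ of $\tau_i$ can decrease by at most $1$ per tick (and only if the scheduler picks $\tau_i$), and our choice of $\theta$ prevents explicit completion. After exactly $k$ ticks we reach a state $S'$ with $\ttd_{S'}(\tau_i) = 0$ and $\rct_{S'}(\tau_i) \geq \rct_S(\tau_i) - k > 0$, so $S' \in \DeadlineMiss{\tau}$ by \autoref{fail}, witnessing the unsafety of $S$.

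The main delicate step is the careful bookkeeping of the non-deterministic choices in point~(3) above: one must verify that avoiding mode changes does not accidentally force $\tau_i$ into explicit completion, and that the $\rct$-decrease bound is tight. Once the $\nat$/$\ttd$ arithmetic is pinned down and the $\rct$ monotonicity is invoked, the conclusion follows immediately.
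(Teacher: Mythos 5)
Your proposal is correct and follows essentially the same idea as the paper's (much terser) proof: the processor can supply at most $\ttd_S(\tau_i)$ units of execution to $\tau_i$ before its deadline, which is less than $\rct_S(\tau_i)$, so a deadline miss is reachable. Your version merely formalises this as an explicit witness path in $A(\tau,\schedule)$ — choosing empty releases, no early completion signal and no mode change, all of which the signal-transition semantics indeed permit — which is a sound (and more rigorous) rendering of the argument the paper dismisses as obvious.
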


\begin{proof}
    It is obvious that if $\ttd_S(\tau_i) < \rct_S(\tau_i)$, then even if the processor is given to the job immediately until its deadline, we will miss its deadline.
\end{proof}

\begin{oracle}[Negative laxity]
$\{S \mid \exists \tau_i \in \Active{\tau}: \laxity_S(\tau_i) < 0  \} \subseteq \Unsafe$.
\begin{proof}
It holds trivially from \autoref{lemma:nc-poslax} that any state $S$ violating the necessary condition will lead to a deadline miss.
\end{proof}
\end{oracle}
In $\lo$ mode, we introduce a \emph{stronger} condition that anticipates an imminent mode change. 
The \emph{worst-laxity} of a $\lo$ task is simply its laxity.
    However, for a $\hi$ task, we incorporate its ``bonus'' execution time ($C_i(\hi)-C_i(\lo)$) that would be incurred if it does not explicitly signal completion, eventually causing a mode change.

\begin{definition}[Worst-laxity]
    The \emphindef{worst-laxity} of an active task $\tau_i$ in state $S$ is: $\wlaxity_S(\tau_i) \defeq \laxity_S(\tau_i) - (C_i(L_i)-C_i(\crit_S))$.
\end{definition}


\begin{lemma}\label{lemma:worst-laxity}
    $\forall \tau_i$, it is \emph{necessary} to have 
        $\wlaxity_S(\tau_i) \geq 0 \text{.}$
\end{lemma}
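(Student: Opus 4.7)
The plan is to reduce the lemma to \autoref{lemma:nc-poslax} in the trivial cases and then exhibit an adversarial execution witnessing a deadline miss in the only interesting case. Concretely, I would split on $\crit_S$ and $L_i$ and observe the following.

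First, I would handle the easy cases. If $\crit_S = \hi$, then any active task $\tau_i$ must have $L_i = \hi$ (since $\lo$-tasks are discarded at the mode change, so $\rct_S(\tau_i) = 0$ for them), hence $C_i(L_i) - C_i(\crit_S) = 0$ and $\wlaxity_S(\tau_i) = \laxity_S(\tau_i)$; the inequality then follows from \autoref{lemma:nc-poslax}. Likewise, if $\crit_S = \lo$ and $L_i = \lo$, by the task-model assumption $C_i(\hi) = C_i(\lo)$, so again $C_i(L_i) - C_i(\crit_S) = 0$ and the statement reduces to \autoref{lemma:nc-poslax}.

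The only remaining case is $\crit_S = \lo$ and $L_i = \hi$, and here the key step is to show that if $\wlaxity_S(\tau_i) < 0$ then some execution starting in $S$ reaches a deadline-miss state, regardless of the scheduler. I would build this execution as follows: from $S$ onward, whenever $\tau_i$'s current job is scheduled and $\rct(\tau_i)$ has not yet reached $0$, the job does not signal completion; once $\rct(\tau_i) = 0$ is first attained, the job again refuses to signal completion, which by \autoref{tcom} forces a signal transition to $\critUp{\cdot}{\tau_i}$, triggering a mode change that bumps $\rct(\tau_i)$ up by $C_i(\hi) - C_i(\lo)$. The total remaining work of $\tau_i$'s current job along this execution is therefore $\rct_S(\tau_i) + (C_i(\hi) - C_i(\lo))$, while the time available until its absolute deadline is $\ttd_S(\tau_i)$. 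The assumption $\wlaxity_S(\tau_i) < 0$ rewrites as $\rct_S(\tau_i) + (C_i(\hi) - C_i(\lo)) > \ttd_S(\tau_i)$, so even if $\tau_i$ were given the processor at every tick up to its deadline, the job cannot complete in time; hence a deadline-miss state is reached along this path, contradicting necessity.

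The main obstacle is not mathematical but notational: I must verify carefully that the adversarial execution I describe is actually a valid path in $A(\tau, \schedule)$, i.e. that the chain of release/run/signal transitions required by \autoref{auto} can indeed be instantiated at every step (in particular that no \emph{other} task forces a deadline miss earlier, which we can avoid by simply assuming otherwise, since we only need \emph{one} witnessing path). Once this is laid out cleanly, the three cases combine into the desired necessary condition $\wlaxity_S(\tau_i) \geq 0$.
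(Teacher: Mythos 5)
Your proposal is correct and follows essentially the same route as the paper's proof: split on the mode, reduce to \autoref{lemma:nc-poslax} whenever $C_i(L_i)=C_i(\crit_S)$, and for a $\hi$ task in $\lo$ mode argue that a mode change forces remaining work $\rct_S(\tau_i)+C_i(\hi)-C_i(\lo)$ exceeding $\ttd_S(\tau_i)$, hence an unavoidable miss. You merely spell out the adversarial execution (the job withholding its completion signal until it triggers the mode change) in more detail than the paper, which simply posits a simultaneous mode change.
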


\begin{proof}
    In $\hi$ mode the laxity is equivalent to the notion of worst-laxity, the property follows from \autoref{lemma:nc-poslax}. In $\lo$ mode we have to distinguish between $\lo$ and $\hi$ tasks. 
    For a $\lo$ task it is necessary to have non negative laxity. For a $\hi$ task if this condition is not satisfied and if a mode change occurs \emph{simultaneously}, by definition the remaining computation time of the active job of $\tau_i$ is larger than the time we have before reaching the deadline. Therefore, we will inevitably miss the task deadline, which proves the property.
\end{proof}

\begin{oracle}[Negative worst-laxity]
    $\{S \mid \exists \tau_i \in \Active{\tau}: \wlaxity_S(\tau_i) < 0\} \subseteq \Unsafe$.
    \begin{proof}
        It holds trivially from \autoref{lemma:worst-laxity} that any state $S$ violating the necessary condition will lead to a deadline miss.
    \end{proof}
\end{oracle}
For the next oracle, we propose to adapt the reasoning of demand bound functions (\dbf)
--- as leveraged for EDF and its variations in many prior works~\cite{baruah1990} ---
to formulate a demand function related to the state of the system
--- \ie the \crit{}, \rct{} and \nat{} information.
Instead of defining, per-task $\tau_i$, a demand function between two \emph{absolute} instants $t_1$ and $t_2$,
we define a demand function \emph{relative} to the current state,
with a single parameter $t$
which is a time in the future \emph{relative} to the ``current time'' (or current instant) in $S$.

We need to separate concerns between current jobs
--- that can be disabled (\lo{} case) or extended (\hi{} case) ---
and future jobs.
We define the number of future jobs of a task,
\ie{} the number of jobs the task can release until this future instant.

\begin{definition}[Number of future jobs of a task until a future instant]
    the \emphindef{maximum number of jobs} that $\tau_i$ can release \emph{strictly} after the current instant in $S$ and before a future instant $t$,
    and that have a deadline no later than $t$, assuming $S$ is in mode $\alpha$,
    is:
    \begin{numcases}{\nj_{S}^{\alpha}(\tau_i,t) \defeq}
        0                       &  \textnormal{if} $L_i<\alpha$ \nonumber \\
        \biggl \lfloor \frac{%
            \max
            \lbrace
                t - \ttd_S(\tau_i)
                ,
                0
            \rbrace
        }{%
            T_i
        } \biggr \rfloor        &  \textnormal{otherwise.} \nonumber
    \end{numcases}
\end{definition}

The first case represents the fact that no job can be released by a \lo{} task when assuming \hi{} mode.
The second case is the amount of periods that fits within the interval from the deadline of the current job and $t$.
%
Based on that, we can define the demand function.

\begin{definition}[Demand function of a task in a state]
    The \emphindef{demand function} of a task $\tau_i$ in state $S$ in mode $\alpha$ is a lower bound on the maximum amount of computation time required by jobs of $\tau_i$
    between the current instant in $S$ and a future instant $t$,
    only for jobs of $\tau_i$ having deadlines before $t$,
    assuming $S$ is in mode $\alpha$
    or that the mode $\alpha$ will be reached at the next possible instant.
    Formally, with $\alpha \in \{\lo{}, \hi{}\}$: $\df_{S}^{\alpha}(\tau_i, t)$
    \begin{numcases}{\defeq}
        0                       &  \textnormal{if} $t<\ttd_S(\tau_i) \vee L_i<\alpha$ \nonumber \\
        \nj_{S}^{\alpha}(\tau_i,t) \cdot C_i(\alpha) &  \textnormal{else if} $\rct_S(\tau_i) = 0$ \nonumber \\
        \begin{multlined}[c]
            \nj_{S}^{\alpha}(\tau_i,t) \cdot C_i(\alpha) + C_i(\alpha) \\ - C_i(\crit_S) + \rct_S(\tau_i)
        \end{multlined}
        &  \textnormal{otherwise.} \nonumber
    \end{numcases}
\end{definition}

No work must be accounted in the demand $\df_{S}^{\alpha}(\tau_i, t)$
if $t$ is before the deadline of $\tau_i$ or if the criticality of $\tau_i$ is below the considered mode $\alpha$.
Otherwise, if $\tau_i$ is idle, we only consider the demand of \emph{future} jobs,
and do not account for the ``bonus'' execution time ($C_i(\hi{}) - C_i(\lo{})$)
a running job of a $\hi{}$ task receives if a mode switch occurs.
This is accounted in the third case, together with the remaining time ($\rct_S(\tau_i)$) of the current job.
%
%
%
Notice that $\nj_{S}^{\crit_S}(\tau_i,\ttd_S(\tau_i)) = 0$, so $\df_{S}^{\crit_S}(\tau_i, \ttd_S(\tau_i)) = \rct_S(\tau_i)$.
We can now define the global demand bound.
\begin{definition}[Demand bound function in a state]
    The global \emphindef{demand bound function} is the total maximum amount of computation time required
    between the current instant in $S$ and a future instant $t$
    by jobs of the task set $\tau$,
    only for jobs having deadlines before $t$, assuming mode $\alpha$.
    Formally: $\dbf_{S}^{\alpha}(t) = \sum_{\tau_i \in \tau} \df_{S}^{\alpha}(\tau_i,t)$.
\end{definition}

The definitions of $\nj$, $\df$ and $\dbf$ uses the mode parameter $\alpha$.
This allows us to consider the current mode when we set $\alpha = \crit_S$
and to anticipate the demand if a mode switch occurs when we set $\alpha = \hi{}$.

\begin{lemma}\label{lemma:nc-dbf}
    For feasibility it is \emph{necessary} to have
    $
    \forall \tau_i \in \Active{\tau}:
    \ttd_S(\tau_i) \geq \dbf_{S}^{\crit_S}(\ttd_S(\tau_i))
    $.
\end{lemma}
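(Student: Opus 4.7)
The plan is to prove the contrapositive: assume the condition is violated, then exhibit an execution of the automaton reaching a deadline-miss state, contradicting feasibility. Concretely, suppose there exists $\tau_i \in \Active{\tau}$ with $t^\star = \ttd_S(\tau_i)$ such that $t^\star < \dbf_{S}^{\crit_S}(t^\star)$. I would construct a concrete path in $A(\tau,\schedule)$ starting from $S$ that witnesses a deadline miss by time $t^\star$, regardless of the scheduler.

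The path construction proceeds as follows. First, I fix the ``non-adversarial'' behaviour on the signalling side: every task $\tau_j$ signals completion exactly when $\rct$ reaches $0$ in its current-mode budget, so that the system remains in mode $\crit_S$ throughout the interval (no mode switch is triggered). This is legitimate by \autoref{tcom} and is what justifies using $\alpha = \crit_S$ in the demand formula. Second, on the release side, every eligible task releases a job at the earliest possible instant, which is exactly the scenario maximising the workload with deadlines no later than $t^\star$: the currently active job of $\tau_j$ contributes $\rct_S(\tau_j)$ (plus the adjustment $C_j(\crit_S) - C_j(\crit_S) = 0$ naturally if active, but I must also add the remaining bonus when $\alpha > \crit_S$ -- here $\alpha = \crit_S$ so the adjustment term vanishes and only $\rct_S(\tau_j)$ counts), and each successive job of $\tau_j$ released at its earliest time (which is $\ttd_S(\tau_j) + k T_j$ for $k \geq 0$) contributes $C_j(\crit_S)$ if its deadline falls in $[0,t^\star]$. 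Summing over all tasks gives exactly $\dbf_{S}^{\crit_S}(t^\star)$.

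The uniprocessor can deliver at most $t^\star$ units of computation in the first $t^\star$ time units. Since the total workload that \emph{must} complete within $[0,t^\star]$ exceeds $t^\star$ by assumption, at least one of the jobs whose deadline falls in $[0,t^\star]$ does not receive all its required execution before its deadline. By \autoref{fail}, the corresponding state along this run lies in $\DeadlineMiss{\tau}$. Hence $A(\tau,\schedule)$ is unsafe and $\tau$ is infeasible, contradicting the hypothesis.

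The main obstacle is the bookkeeping that matches the combinatorial definition of $\df_S^{\crit_S}$ to the actual workload generated by the constructed path. In particular, I would carefully verify: (i) that $\nj_S^{\crit_S}(\tau_j, t^\star) = \lfloor (t^\star - \ttd_S(\tau_j))/T_j \rfloor$ really counts the maximum number of future jobs of $\tau_j$ whose deadline lies in $[0,t^\star]$ when releases are earliest-possible (using $\nat_S(\tau_j) = \ttd_S(\tau_j) + (T_j - D_j)$); (ii) that the third case of $\df_S^{\crit_S}$ correctly accounts for an active job with remaining time $\rct_S(\tau_i)$ and deadline $\ttd_S(\tau_i) \leq t^\star$; and (iii) that no additional workload with deadline $> t^\star$ can be ``forced'' to execute before $t^\star$, so the uniprocessor capacity argument is tight. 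All of these are routine once one unfolds the definitions, but the indexing is where errors tend to creep in.
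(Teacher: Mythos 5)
Your argument is correct and matches the paper's own justification (given when establishing the Over demand oracle): both rest on the observations that $\nj_{S}^{\crit_S}(\tau_k,\ttd_S(\tau_i))$ counts the maximum number of jobs of $\tau_k$ with deadline no later than $\ttd_S(\tau_i)$ under earliest-possible releases, that $\df_{S}^{\crit_S}$ then totals the execution that must complete by that instant when no job signals early and no mode switch occurs, and that a uniprocessor cannot supply more than $\ttd_S(\tau_i)$ units of work in that window. Your contrapositive phrasing with an explicit witnessing run (earliest releases, completions signalled only at exhaustion of the current-mode budget so the system stays in mode $\crit_S$) merely makes explicit the adversarial scenario the paper invokes implicitly, so it is essentially the same proof.
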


\begin{oracle}[Over demand]\label{oracle:over-demand}
    $\{S \mid \exists \tau_i \in \Active{\tau}: \ttd_S(\tau_i) < \dbf_{S}^{\crit_S}(\ttd_S(\tau_i))\} \subseteq \Unsafe$.
    \begin{proof}
        For any $\tau_k \in \tau$,
        $\nj_{S}^{\crit_S}(\tau_k,\ttd_S(\tau_i))$ is the highest possible $\ell$ such that
        $\nat_S(\tau_k) + D_k + (\ell-1) \cdot T_k \leq \ttd_S(\tau_i)$,
        showing it is the highest possible number of jobs released by $\tau_k$ between the current instant in $S$ and the next deadline of $\tau_{i}$
        having a deadline before $\ttd_S(\tau_i)$.
        Note that $\df_{S}^{\crit_S}(\tau_k, \ttd_S(\tau_i))$ represents the total amount of execution time required by $\tau_k$ including its potentially active job and all its future jobs until $\ttd_S(\tau_i)$, assuming no mode change.
        Consequently, $\dbf_{S}^{\crit_S}(\ttd_S(\tau_i))$ is the amount of computation time required by all jobs of the system with deadline $\leq \ttd_S(\tau_i)$.
        This quantity corresponds to the work that must be scheduled before $\ttd_S(\tau_i)$ without mode switch,
        so it is necessary, in a uniprocessor system, that this quantity does not exceed $\ttd_S(\tau_i)$.
    \end{proof}
\end{oracle}
The above result does not anticipate a mode switch (\ie{} $\alpha = \crit_S$),
it gives the necessary condition foreseeing that no mode switch occurs.
Using a similar construct but setting $\alpha = \hi{}$,
a necessary condition that foresees a mode switch can be derived as follows.

\begin{lemma}\label{lemma:nc-hi-dbf}
    For feasibility it is \emph{necessary} to have
    $
    \forall \tau_i \in \Active{\tau}:
    \ttd_S(\tau_i) \geq \dbf_{S}^{\hi{}}(\ttd_S(\tau_i))
    $.
\end{lemma}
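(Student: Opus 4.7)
The approach will be to mirror the argument of the proof of \autoref{oracle:over-demand}, the key additional twist being that we must construct an execution in which a mode switch actually takes place so that the $\hi$-mode demand is fully realised. I would fix an active task $\tau_i$ and its next deadline at $\ttd_S(\tau_i)$, exhibit a concrete behaviour of the system from $S$ in which all jobs with deadline $\leq \ttd_S(\tau_i)$ collectively require exactly $\dbf_S^{\hi}(\ttd_S(\tau_i))$ units of computation, and then conclude via uniprocessor time conservation exactly as in the proof of \autoref{oracle:over-demand}.

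The adversarial behaviour I would use is the following. Each eligible task releases a job at the earliest opportunity; if $\crit_S = \lo$, some active (or soon-to-be-released) $\hi$ job is allowed to run down to $\rct = 0$ and then refrains from signalling completion, thereby triggering the mode switch at the earliest clock-tick where this is possible; once in $\hi$ mode, every $\hi$ job consumes its full $C_j(\hi)$ budget. The bulk of the proof then consists in checking that the per-task contribution in this execution matches each branch of $\df_S^{\hi}$. A task $\tau_j$ with $L_j = \lo$ contributes $0$, since it is discarded at the mode switch (or is already irrelevant when $\crit_S = \hi$). An idle $\hi$ task ($\rct_S(\tau_j) = 0$) contributes $\nj_S^{\hi}(\tau_j,\ttd_S(\tau_i)) \cdot C_j(\hi)$ through its future releases. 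An active $\hi$ task ($\rct_S(\tau_j) > 0$) contributes its current remaining budget $\rct_S(\tau_j)$, plus the bonus $C_j(\hi) - C_j(\crit_S)$ that the mode switch grants to its in-flight job, plus $\nj_S^{\hi}(\tau_j,\ttd_S(\tau_i)) \cdot C_j(\hi)$ for future releases. Summing over $\tau_j \in \tau$ yields exactly $\dbf_S^{\hi}(\ttd_S(\tau_i))$, and a uniprocessor can execute at most $\ttd_S(\tau_i)$ units of work by time $\ttd_S(\tau_i)$, so $\ttd_S(\tau_i) < \dbf_S^{\hi}(\ttd_S(\tau_i))$ forces a deadline miss.

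The delicate step will be justifying that the mode switch can actually be forced early enough that the bonus term $C_j(\hi) - C_j(\crit_S)$ legitimately applies to every currently active $\hi$ job counted in the third branch of $\df^{\hi}$. If some such $\tau_j$ has $\rct_S(\tau_j) > 0$, one can keep scheduling it until its in-flight job would complete in $\lo$ mode and then have it not signal completion, so that the switch occurs strictly before its in-flight job is done and the bonus remains valid. When no $\hi$ task is active in $S$, the third branch vanishes from the sum and the count reduces to future-release contributions, for which it suffices to trigger the switch on the first future $\hi$ job. In both subcases the accounting goes through and the necessary condition follows.
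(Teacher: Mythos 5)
Your route differs from the paper's. The paper never realises the \hi{}-mode demand through an explicit execution: it proves the corresponding oracle (\autoref{oracle:hi-over-demand}) by reduction to the current-mode condition of \autoref{lemma:nc-dbf}, applied to the anticipated state $S'=\critUp{S}{\bot}$, arguing that $S'$ is ``more optimistic'' than $S$ (it merely inflates the $\rct$ of active \hi{} jobs and discards \lo{} tasks), together with a separate case where no mode switch can be guaranteed before $\ttd_S(\tau_i)$, in which case $\dbf_{S}^{\hi}(\ttd_S(\tau_i))=0$ and the condition is vacuous. Your direct adversarial-demand argument, mirroring \autoref{oracle:over-demand}, is a legitimate alternative and arguably more self-contained: it verifies branch by branch of $\df_{S}^{\hi}$ that the counted work is genuinely mandatory, instead of leaning on the informal optimism comparison between $S$ and $S'$.

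One step of your write-up is off target, though: in the ``delicate step'' you let the adversary ``keep scheduling'' the in-flight \hi{} job so as to force the mode switch early enough. In this framework the adversary controls only releases and completion signals; the scheduler is a fixed, arbitrary deterministic $\schedule$, and what must be shown is that a deadline-miss path exists in $A(\tau,\schedule)$ for that fixed scheduler (indeed for every scheduler, since the lemma asserts a necessary condition for feasibility). Picking the schedule yourself would only show that \emph{some} scheduler can be driven to a miss, which is not the statement. The repair is local and does not require forcing the switch at all: under the adversary ``release every \hi{} job as early as possible, never signal completion,'' an active \hi{} job can only complete after receiving $\rct_S(\tau_j)+C_j(\hi)-C_j(\crit_S)$ units of execution after $S$ (if its $\rct$ reaches $0$ in \lo{} mode it triggers the switch itself; if another job triggers the switch earlier, its budget is inflated by exactly the same bonus), and each future \hi{} job counted by $\nj_{S}^{\hi}$ likewise needs the full $C_j(\hi)$. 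Hence either every counted job receives its stated amount before its deadline, all such deadlines being at most $\ttd_S(\tau_i)$ --- which would require at least $\dbf_{S}^{\hi}(\ttd_S(\tau_i))$ units of processor time in a window of length $\ttd_S(\tau_i)$, impossible when the condition is violated --- or some counted job is incomplete at its deadline. Either way a miss is unavoidable whatever the scheduler does, which is exactly what the lemma needs.
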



This necessary condition guarantees that all $\hi$-tasks in $S$ will have enough time to execute all the $\rct$ of both their current and future jobs, with their deadline prior to those of any active tasks' jobs,
under the assumption that no job explicitly signals completion, eventually triggering a mode switch at the next possible instant.
%
%
Note that this necessary condition optimistically assumes that no future computing time would be wastefully given to a $\lo{}$ task until $\ttd_S(\tau_i)$.
In practice, should a $\lo$ task be scheduled, then $\dbf_{S}^{\hi{}}(\ttd_S(\tau_i))$ would remain constant, whereas $\ttd_S(\tau_i)$ will decrease, potentially violating this necessary condition.
%
%
%
%


\begin{oracle}[$\hi{}$ over demand]\label{oracle:hi-over-demand}
    $\{S \mid \exists \tau_i \in \Active{\tau}: \ttd_S(\tau_i) < \dbf_{S}^{\hi{}}(\ttd_S(\tau_i))\} \subseteq \Unsafe$.
    \begin{proof}
        If $\crit_S = \hi$, then the \autoref{lemma:nc-hi-dbf} is equivalent to \autoref{lemma:nc-dbf}, and the proof provided for \autoref{oracle:over-demand} holds.
        If $\crit_S = \lo$,
        we consider two cases,  one where a mode switch, triggered by a $\hi$-task $\tau_k \in \{\tau \mid L_k=\hi\}$, can occur before $\ttd_S(\tau_i)$, and the other where it may not.
        For a (current or future) job of $\tau_k$ to trigger a mode switch, it is sufficient to have $\rct(\tau_k)>0 \wedge \ttd_S(\tau_i)\geq\ttd_S(\tau_k)$ or $\rct(\tau_k)=0 \wedge \ttd_S(\tau_i)\geq\ttd_S(\tau_k)+T_k$.
        In such cases, we can anticipate the mode switch and enforce
        $\forall \tau_i \in \Active{\tau}: \ttd_{S'}(\tau_i) \geq \dbf_{S'}^{\crit_{S'}}(\ttd_{S'}(\tau_i))$ 
        as a necessary condition as per \autoref{lemma:nc-dbf}, with $S' = \critUp{S}{\bot}$.
        Indeed be noted that $\critUp{S}{\bot}$ merely increases the $\rct$ of the active $\hi$-jobs and discards the $\lo$ tasks.
        Therefore, if anything, this anticipation is more optimistic since it forestalls the wasteful scheduling of $\lo$-jobs.
        Furthermore, given $\crit_{S'}$=$\hi$, we can express $\dbf_{S'}^{\crit_{S'}}(\ttd_{S'}(\tau_i))$ as $\dbf_{S'}^{\hi{}}(\ttd_{S'}(\tau_i))$,
        so if this inequality does not hold, $S'$ is deemed unsafe and so is $S$ as $S'$ is more optimistic.
        If a mode switch occurrence cannot be guaranteed before $\ttd_S(\tau_i)$, \ie{} when $\rct(\tau_k)>0 \wedge \ttd_S(\tau_i)<\ttd_S(\tau_k)$ or $\rct(\tau_k)=0 \wedge \ttd_S(\tau_i)<\ttd_S(\tau_k)+T_k$, then $\df_{S}^{\hi}(\tau_k, \ttd_S(\tau_i))=0$ per definition.
        Moreover, for $\lo$-tasks $\tau_j \in \{\tau \mid L_j=\lo\}$, $\df_{S}^{\hi}(\tau_j, \ttd_S(\tau_i))=0$, and $\ttd_S(\tau_i) \geq 0 = \dbf_{S}^{\hi{}}(\ttd_S(\tau_i))$ per definitions. Thus, under these circumstances, this oracle's inequality never holds and $S$ is never deemed unsafe by it.

    \end{proof}
  \end{oracle}

  We can define a \emph{stronger} condition that generalises the negative laxity one. Indeed, the minimal laxity must be larger than $0$ (or null), the sum of the \emph{two} minimal laxities must be larger than $1$ (or $1$), etc.

\begin{definition}[Sum minimal laxities]
	In this definition, we assume that the laxities of active tasks are arranged in ascending order: $\laxity_S(\tau_1) \leq \laxity_S(\tau_2) \leq \cdots \leq \laxity_S(\tau_k)$. We define $\sumlax_k$ as the sum of the $k$ minimal laxities: $\sumlax_k = \sum_{i=1}^k \laxity_S(\tau_i)$.
\end{definition}

\begin{lemma}\label{lemma:minimal-laxities}
For the feasibility of $n$ sporadic tasks upon uniprocessor it is \emph{necessary} to have
$\forall 1 \leq k \leq n: \sumlax_k > k-2$.
\end{lemma}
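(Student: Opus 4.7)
The plan is to generalise the proof of \autoref{lemma:nc-poslax} by exploiting the uniprocessor capacity constraint applied to an appropriately chosen time window. The key observation is that on a single processor, in any window of length $L$, at most $L$ units of work can be accomplished, so if $m$ active jobs have deadlines $\leq L$ then the sum of their remaining computation times must not exceed $L$. I would apply this to the $k$ active tasks whose laxities are the smallest, choosing the window so it exactly covers the largest of their deadlines.

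More concretely, I would first fix some $k$ with $1 \leq k \leq n$ and consider the $k$ tasks $\tau_1, \ldots, \tau_k$ whose laxities $\laxity_S(\tau_1) \leq \cdots \leq \laxity_S(\tau_k)$ are smallest. Let $i^\star \in \{1,\ldots,k\}$ be an index that maximises $\ttd_S(\tau_{i^\star})$ (note: not necessarily $i^\star = k$, since the laxity ordering and the deadline ordering may differ). Since every one of these $k$ jobs must complete by its own deadline, and all these deadlines are at most $\ttd_S(\tau_{i^\star})$, uniprocessor feasibility forces
\[
  \sum_{i=1}^{k} \rct_S(\tau_i) \;\leq\; \ttd_S(\tau_{i^\star}) \;=\; \laxity_S(\tau_{i^\star}) + \rct_S(\tau_{i^\star}),
\]
which rearranges to $\sum_{i \neq i^\star} \rct_S(\tau_i) \leq \laxity_S(\tau_{i^\star})$. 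Because time is discrete and each $\tau_i$ is active (so $\rct_S(\tau_i) \geq 1$), the left-hand side is at least $k-1$, hence $\laxity_S(\tau_{i^\star}) \geq k-1$. Combining this with the non-negativity of every individual laxity (guaranteed by \autoref{lemma:nc-poslax} under the feasibility assumption) gives
\[
  \sumlax_k \;=\; \sum_{i=1}^{k} \laxity_S(\tau_i) \;\geq\; \laxity_S(\tau_{i^\star}) \;\geq\; k-1 \;>\; k-2,
\]
which is the desired inequality.

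The main obstacle, though modest, is avoiding the trap of assuming $i^\star = k$ (the laxity ordering is unrelated to the deadline ordering), and ensuring one correctly uses $\rct_S(\tau_i) \geq 1$ for active tasks, which relies on the discrete-time hypothesis spelled out in \autoref{sec:problemdef}. One should also take care that the framework counts only active tasks in the sum --- inactive tasks have $\rct_S = 0$ and would otherwise break the lower bound $\sum_{i \neq i^\star} \rct_S(\tau_i) \geq k-1$. Apart from these bookkeeping points, the argument is essentially a one-window Baruah-style processor-demand bound specialised to the state representation used in this paper.
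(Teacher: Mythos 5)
Your proof is correct, but it follows a genuinely different route from the paper's. The paper argues by induction on $k$: the base case is \autoref{lemma:nc-poslax}, and the inductive step is a proof by contradiction that uses the integrality of laxities to conclude that $\sumlax_{k+1}\leq k-1$ would force at least two active jobs to have zero laxity simultaneously, which is impossible on a uniprocessor. You instead give a direct, non-inductive argument: a single processor-demand (window) bound over the interval ending at the largest $\ttd$ among the $k$ smallest-laxity active tasks, yielding $\laxity_S(\tau_{i^\star})\geq\sum_{i\neq i^\star}\rct_S(\tau_i)\geq k-1$ and hence $\sumlax_k\geq k-1>k-2$ once \autoref{lemma:nc-poslax} bounds the remaining laxities by $0$. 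Your capacity claim ``the $k$ jobs, all with deadlines at most $\ttd_S(\tau_{i^\star})$, must fit in a window of length $\ttd_S(\tau_{i^\star})$'' is exactly the mechanism behind \autoref{lemma:nc-dbf} and \autoref{oracle:over-demand} restricted to currently active jobs: it is justified by the adversarial path in which no job signals completion early and no mode change occurs, so each active job really demands its full $\rct_S$ --- the same (implicit) scenario the paper relies on, and it is scheduler-agnostic as required. What each approach buys: yours avoids induction, isolates a slightly stronger intermediate fact (the latest-deadline task among any $k$ active tasks has laxity at least the sum of the others' $\rct$), and correctly handles the two pitfalls you flag (the max-$\ttd$ task need not be the max-laxity one, and $\rct_S(\tau_i)\geq 1$ only holds for active tasks, so $k$ must range over active tasks, as in the paper's oracle); the paper's induction is more elementary, needing only the observation that two simultaneous zero-laxity jobs are infeasible, at the cost of the contradiction bookkeeping and an explicit appeal to integrality.
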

\begin{proof}
    By induction.
    
    \emph{Base step}:
    When $k=1$, according to Lemma~\ref{lemma:nc-poslax}, it is necessary that the smallest task laxity ($\sumlax_1$) to be non negative.
    \emph{Inductive step}:
    We will show that the property also holds for $n=k+1$.
    By contradiction: we assume that $\sumlax_k = k - 2 + \epsilon > k-2$ ($\epsilon > 0$, induction hypothesis), and $\sumlax_{k+1} \leq k-1$ and the system is feasible.
    Notice that $\laxity_S(\tau_i)$ is an integer $\forall \tau_i \in \tau$, as we are considering discrete time systems (\ie{} $T_i$, $D_i$, etc. are all integers).
    \begin{align*}
        \sumlax_{k+1}   & = \sumlax_{k} + \laxity_S(\tau_{k+1}) \leq k-1                        & \text{by definition of $\sumlax_{k+1}$ and assumption} \\
                        & = k - 2 + \epsilon + \laxity_S(\tau_{k+1}) \leq k-1                   & \text{induction hypothesis} \\
                        & \Rightarrow \epsilon + \laxity_S(\tau_{k+1}) \leq 1                   & \text{rearranging terms} \\
                        & \Rightarrow \laxity_S(\tau_{k+1}) < 1                                 & \text{as $\epsilon > 0$} \\
                        & \Rightarrow  \laxity_S(\tau_{k+1}) \leq 0                             & \text{as $\laxity_S(\tau_{k+1})$ is an integer} \\
                        & \Rightarrow \forall 1 \leq i \leq k : \; \laxity_S(\tau_{i}) \leq  0  & \text{as tasks are sorted by laxity}
    \end{align*}
    Since $k > 1$ at least \emph{two} laxities are null simultaneously which contradicts the system feasibility.
\end{proof}

To our knowledge, this generalisation is original regarding the state of the art.

\begin{oracle}[Sum-min-laxity]
$\{S \mid \exists \tau_k \in \Active{\tau} :  \sumlax_k \leq k-2 \} \subseteq \Unsafe,$
assuming tasks $\tau_k \in \tau$ are sorted by laxity.

\begin{proof}
    Lemma~\ref{lemma:minimal-laxities}.
\end{proof}
\end{oracle}
Then, one can combine the concept of worst-laxity with that of sum of minimal ones to obtain an even stronger condition (at least in $\lo$ mode).

\begin{corollary}\label{corollary:minimal-laxities}
    $\forall \tau_i$ it is \emph{necessary} to have $\forall 1 \leq k \leq n: \sumlax'_k > k-2 \text{.}$

    where $\sumlax'_k = \sum_{i=1}^k \wlaxity_S(\tau_i)$, tasks are considered by non decreasing worst-laxity: $\sumlax'_1 \leq \cdots \leq \sumlax'_{n}$
\end{corollary}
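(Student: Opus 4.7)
The plan is to mirror the induction structure of \autoref{lemma:minimal-laxities} with $\wlaxity$ in place of $\laxity$ throughout, using \autoref{lemma:worst-laxity} as the base case in place of \autoref{lemma:nc-poslax}. Concretely, for $k=1$ we get $\sumlax'_1 = \wlaxity_S(\tau_1) \geq 0 > -1 = 1-2$ directly from \autoref{lemma:worst-laxity}. The inductive step reuses the exact same algebraic manipulation: assuming $\sumlax'_k > k-2$ and $\sumlax'_{k+1} \leq k-1$, the telescoping gives $\wlaxity_S(\tau_{k+1}) < 1$, and since all timing parameters are integers (hence $\wlaxity_S(\tau_{k+1}) \in \mathbb{Z}$), we obtain $\wlaxity_S(\tau_{k+1}) \leq 0$; the sort by non-decreasing worst-laxity then forces $\wlaxity_S(\tau_i) \leq 0$ for all $1 \leq i \leq k+1$, so in particular at least two active tasks have non-positive worst-laxity.

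The hard part is turning ``at least two tasks with $\wlaxity \leq 0$'' into a clean contradiction with feasibility, because unlike \autoref{lemma:minimal-laxities} where two null laxities immediately forbid a valid uniprocessor schedule, worst-laxity encodes a hypothetical future mode switch. To handle this, I would split on the mode of $S$ and on the criticalities of the two offending tasks $\tau_a, \tau_b$. If $\crit_S = \hi$, then by definition $\wlaxity_S = \laxity_S$ on surviving tasks and the contradiction follows from \autoref{lemma:minimal-laxities} applied at rank $k+1$ in $S$ itself. If $\crit_S = \lo$ and at least one of $\tau_a, \tau_b$ is a $\lo$ task, its worst-laxity equals its laxity, so together with the other offending task (whose worst-laxity already lower-bounds its laxity) we again have two tasks with $\laxity_S \leq 0$, contradicting \autoref{lemma:minimal-laxities} in $S$.

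The remaining subcase, both $\tau_a$ and $\tau_b$ being $\hi$ tasks in $\lo$ mode, is handled by the ``simultaneous mode change'' reduction already used in the proof of \autoref{lemma:worst-laxity}: consider the state $S' = \critUp{S}{\bot}$ reached by an immediate mode switch. Then $\crit_{S'} = \hi$, the $\nat$ values are preserved, and for every active $\hi$ task $\tau_i$ we have $\rct_{S'}(\tau_i) = \rct_S(\tau_i) + C_i(\hi) - C_i(\lo)$, whence $\laxity_{S'}(\tau_i) = \wlaxity_S(\tau_i)$. So $\tau_a$ and $\tau_b$ both have non-positive laxity in $S'$, and applying \autoref{lemma:minimal-laxities} in $S'$ (where only $\hi$ tasks remain active) yields the contradiction, provided $S$ was reachable under a feasible schedule (since a feasible schedule must remain feasible after any admissible mode change).

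I would close by noting that the integrality argument hinges on $\wlaxity_S(\tau_i) = \ttd_S(\tau_i) - \rct_S(\tau_i) - (C_i(L_i) - C_i(\crit_S))$ being a difference of integers, which is immediate from the model in \autoref{sec:problemdef}. I expect the case split above to be the only place where genuine care is needed; everything else is a syntactic copy of the proof of \autoref{lemma:minimal-laxities}.
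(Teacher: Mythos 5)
Your overall plan (redo the induction of \autoref{lemma:minimal-laxities} with $\wlaxity$ and reduce everything to ``two active tasks with non-positive worst-laxity is infeasible'') is the right one, and indeed the paper itself disposes of the corollary with a one-line appeal to \autoref{lemma:worst-laxity} and the proof of \autoref{lemma:minimal-laxities}; you are right that the endgame deserves an argument. But your case analysis has a genuine gap in the mixed subcase: when $\crit_S=\lo$, one offending task $\tau_a$ is a $\lo$ task and the other, $\tau_b$, is a $\hi$ task with $C_b(\hi)>C_b(\lo)$. There you argue that since worst-laxity lower-bounds laxity you ``again have two tasks with $\laxity_S\leq 0$'' --- this inference is backwards: $\wlaxity_S(\tau_b)\leq\laxity_S(\tau_b)$, so $\wlaxity_S(\tau_b)\leq 0$ gives no upper bound on $\laxity_S(\tau_b)$. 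Concretely, take $\rct_S(\tau_b)=1$, $\ttd_S(\tau_b)=3$, $C_b(\hi)-C_b(\lo)=2$: then $\wlaxity_S(\tau_b)=0$ but $\laxity_S(\tau_b)=2>0$, so \autoref{lemma:minimal-laxities} applied to $S$ yields no contradiction. Nor does your case-3 device rescue it: in $\critUp{S}{\bot}$ the $\lo$ task $\tau_a$ is discarded, leaving a single zero-laxity task, which is perfectly consistent. The mixed case needs its own (adaptive) argument, e.g.: if the scheduler ever gives a slot before $\ttd_S(\tau_a)$ to anything but $\tau_a$ while still in $\lo$ mode, the adversary has every $\hi$ job signal completion at its $\lo$ budget, no mode change ever occurs, and $\tau_a$ (which had zero laxity) misses; otherwise $\tau_a$ monopolises the processor up to $\ttd_S(\tau_a)$, during which no $\hi$ job can run to exhaustion and hence no mode change can be triggered, so at that instant $\tau_b$'s worst-laxity has become strictly negative and \autoref{lemma:worst-laxity} forces a miss. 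Without something of this sort the corollary's key two-task step is unproven precisely in the generic situation it is meant to exclude.

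A secondary, lesser issue is your case 3: the transition from $S$ to $\critUp{S}{\bot}$ is not an ``admissible mode change'' of the automaton (mode switches occur only at the end of a tick in which a $\hi$ job exhausts its $\lo$ budget while being executed and withholds its signal), so ``a feasible schedule must remain feasible after any admissible mode change'' does not literally apply; the conclusion is still correct, but it should be justified by the same adversarial scenario as in the proof of \autoref{lemma:worst-laxity} (neither offending $\hi$ job ever signals, so each current job requires $\ttd_S$ units of processor before its deadline, and the two demands cannot both fit on one processor), or by the kind of careful anticipation argument the paper gives for \autoref{oracle:hi-over-demand}. Your base case, inductive algebra, integrality remark, and the $\crit_S=\hi$ and two-$\lo$-task subcases are fine.
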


\begin{proof}
    This immediately follows from Lemma~\ref{lemma:worst-laxity} and the proof of Lemma~\ref{lemma:minimal-laxities}.
\end{proof}

\begin{oracle}[Sum-min-worst-laxity]
$\{S \mid \exists \tau_k \in \Active{\tau} :  \sumlax'_k \leq k-2 \} \subseteq \Unsafe$,
assuming tasks $\tau_k \in \tau$ are sorted by worst-laxity.
\begin{proof}
    Corollary~\ref{corollary:minimal-laxities}
\end{proof}
\end{oracle}




\section{Evaluation}
\label{sec:evaluation}

To demonstrate the benefits of the search space reduction and the accuracy of the exact test, we developed a C++ tool~\cite{mcgraph_explorer} that takes a task set and scheduling policy as inputs, converts it into an automaton as in \autoref{sec:automaton-semantic}, and explores it according to \autoref{acbfs-safe}, allowing to assess the schedulability of the task set.
The same task set and scheduling policy can be explored with different optimisations (simulation relation and oracles), resulting in the same outcome (safe or unsafe) but with different time and space complexity.
%
%
Experiments were run on a server computer with 128 GB of RAM and a 128-core AMD Ryzen Threadripper 3990X CPU running at 2.9GHz.

\subsection{Random task set parameter generation}

Inspired by the works of Baruah~\cite{baruah2011response} and Ekberg~\cite{ekberg2014bounding}, the task sets used in our experiments were randomly generated using a python tool~\cite{mcgraph_explorer} as follows:
$T_i$ is an integer number generated according to a log-uniform distribution from the range of $[T^{\min}, T^{\max}]$;
$L_i = \hi$ with probability $0\leq P_\hi \leq 1$, otherwise $L_i = \lo$,
and  $D_i = T_i$.
The generation of $C_i$ values depends on a target average utilisation $U^* \in [0,1]$.
Then $\delta$ is a \texttt{float} drawn from the uniform distribution $\mathcal{U}\left(-\mu, \mu\right)$ with $\mu = \min\{U^*, 1-U^*\}$.
The target $\lo$-criticality utilisation of tasks, $U^{*\lo}$, are generated using the Dirichlet-Rescale (DRS) algorithm \cite{GriffinRTSS2020}, that gives an unbiased distribution of utilisation values.
We give the following inputs:
$n$ the number of tasks, $U= U^*+\delta$, $u_i^{\min} = 1/T_i$ and $u_i^{\max} = 1$ for $1 \leq i \leq n$.
Similarly, the target $\hi$-criticality utilisations of tasks, $U^{*\hi}$, are generated using DRS too with, as inputs, the same number of tasks $n$, $U=U^*-\delta$, $u_i^{\min} = U^{*\lo}(\tau_i)$ and $u_i^{\max} = 1$ if $L_i = \hi$, $u_i^{\min}=0=u_i^{\max}$ otherwise, for $1 \leq i\leq n$.
$C_i(\lo) = U^{*\lo}(\tau_i)\cdot T_i$ rounded to the nearest integer.
$C_i(\hi) = U^{*\hi}(\tau_i)\cdot T_i$ rounded to the nearest integer if $L_i = \hi$ and $C_i(\hi)=C_i(\lo)$ otherwise.
%
%
Task sets were dropped if $U^\lo(\tau) > 1$ or $U^\hi(\tau)> 1$.
Duplicate task sets were discarded, as were task sets with all tasks sharing the same criticality level.
Additionally, task sets were dropped if $\mid U^{\avg}(\tau)-U^*\mid > 0.005$.
%
%
%
Deadlines are implicit because our explorations will use among others EDF-VD (see \autoref{edfvd}) which can only cope with such deadlines~\cite{BaruahBDMSS11}.
Hereunder, a range of numbers from $f$ to $t$ with a step increase of $s$ will be denoted by $[f;t;s]$.
\subsection{Antichain impact on state space exploration}


\begin{figure}
    \centering
    \includegraphics[width=\columnwidth]{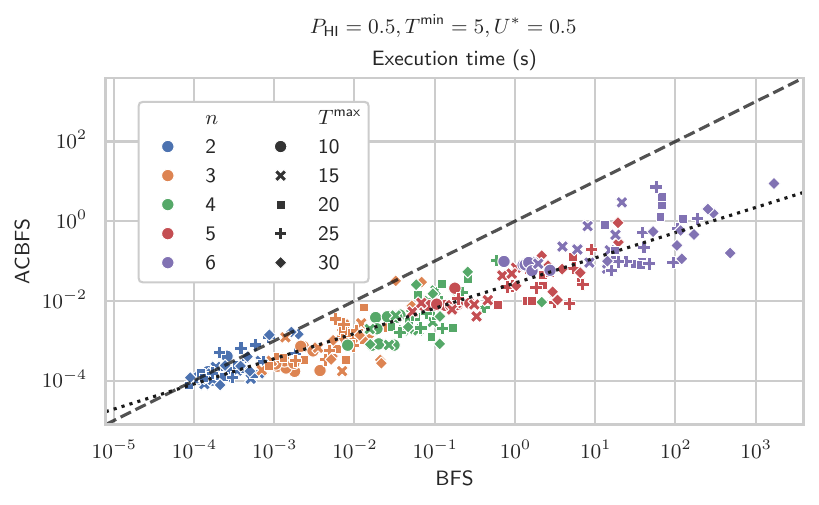}
    \caption{%
        Execution times in seconds before halt for BFS and ACBFS. Dashed line is $x=y$, dotted line is the linear regression on the samples (after $\fct{log}_{10}$).
    }
    \label{fig:state-bfs-acbfs}
\end{figure}

\autoref{fig:state-bfs-acbfs} compares the performance of antichain breadth first search using the idle tasks simulation relation (ACBFS) and classical breadth first search (BFS) with EDF-VD and without any  oracle.
Task sets were generated with varying number of tasks and maximum periods, the rest of the parameters being fixed.
For each combination of parameters, 10 sets were generated.
All task sets were schedulable, hence, their automaton was fully developed as every single state reachable from the initial state was visited (or simulated).
ACBFS outperforms BFS in execution time, except for the few cases where the search space is very small
--- in such cases, both algorithms ran for less than $0.01$ second.
This shows that ACBFS and the idle tasks simulation relation scale better than BFS, reducing the exploration time by at least one order of magnitude.
As the state space size increases, the execution time gap increases, as indicated by the regression line.
Results on the number of states visited for each of the algorithms are on par, but analysing their execution time takes into account the (potential) extra computing time of the antichains.



\begin{figure}
    \centering
    \subfigure[]{
        \includegraphics[width=0.49\columnwidth]{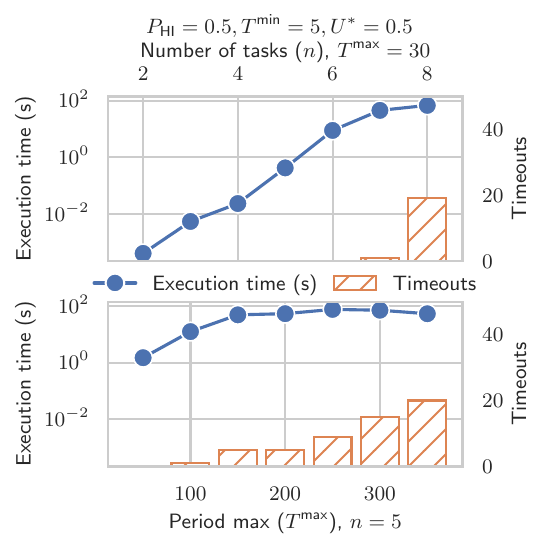}
        \label{fig:acbfs-scalability}
    }%
    \subfigure[]{
        \includegraphics[width=0.49\columnwidth]{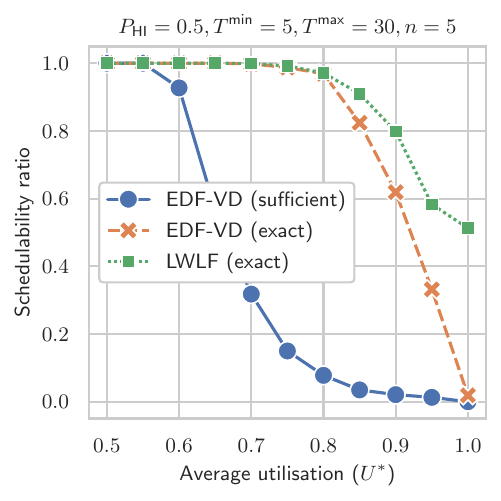}
        \label{fig:scheduling}
    } 
    \caption{\subref{fig:acbfs-scalability} ACBFS' median execution times over 40 explorations before halt, with varying number of task and maximum period and a \timeoutMins{} minutes timeout.\\\subref{fig:scheduling} Schedulability ratio over average utilisation per test. Exploration with ACBFS and $\hi$ over demand for exact tests.}
    \label{fig:foobar}
\end{figure}


\autoref{fig:acbfs-scalability} illustrates the scalability of ACBFS with a varying quantity of tasks and maximum period.
Both parameters contribute to prolonging the exploration time, but the number of tasks exerts a stronger and more consistent effect.
The automaton's size being on the order of $\mathcal{O}({(T^\max)}^n)$ explains this stronger impact for the number of tasks.
The period's effect is less stable, given that the $T^\max$ parameter serves as an upper bound to the distribution from which the period is drawn; as such, even with a high $T^\max$ value, the periods can still be set at lower levels.
The majority of task sets, composed of 8 tasks, were successfully explored within the allocated \timeoutMins-minute timeout.
As for the maximum period parameter, timeouts were observed as early as $T^\max=100$.
Nearly a third of the task sets with $T^\max=350$ exceeded the timeout, suggesting that specific combinations of periods can lead to a larger state space.

While the antichain optimization enables a reduction in the state space, the problem remains challenging and experiences exponential growth, especially considering that it deals with sporadic tasks.
Most of the task sets with $n=7$ and some with $n=8$ can be explored within an acceptable timeframe, as ACBFS can manage at least one more task than BFS (see \autoref{fig:state-bfs-acbfs}).
Beyond, both the execution time and required amount of memory (the maximal explored states are stored as per \autoref{acbfs-safe}) to explore some of the task sets become prohibitive.
In comparison, other papers providing an exact test for (single-criticality) \emph{sporadic} tasks sets with FJP schedulers experimented with task sets parameters up to $T^\max=6$~\cite{bakerbrute}, $n=8,T^\max=8$~\cite{geeraerts2013multiprocessor} and $n=4,T \in \{1, 2, 5, 10, 20, 50, 100, 200, 1000\}$~\cite{yalcinkaya2019exact}.
In real-world situations, certifying a task set requires a single exploration;
in those cases, the exploration time could extend longer than in our experiments, allowing to increase task number or periods.

\subsection{Oracle impact}



To measure the impact of the oracles, $2100$ task sets were generated with $T^{\min} = 5, T^{\max} = 30, P_\hi = 0.5, n = 5$ and $U^* \in [0.8;1;0.01]$ with $100$ task sets for each values of $U^*$.
\autoref{fig:acbfs-scalability} shows our simulator is capable of accommodating larger systems.
The parameters here selected expedite the exploration process, thereby enhancing the quantity of results and facilitating the accrual of more robust, aggregated metrics.
%
%
All task sets were explored with ACBFS and EDF-VD, several times with different oracles.
As baseline, an exploration with ACBFS without any oracle is used.
During those explorations, 923 task sets were found to be unschedulable.
Thus, about half of the task sets are schedulable, obtained over several (omitted) iterations of the experiments to tune the range of utilisations.
We expect the impacts of oracles to differ depending on the actual schedulability of the evaluated task sets.
For example, unsafe oracles computed on schedulable task sets will not reduce the search space and will only constitute overheads
--- this, however, cannot be known a priori by the system designer.
\begin{figure}[t]
    \centering
    \includegraphics[width=\columnwidth]{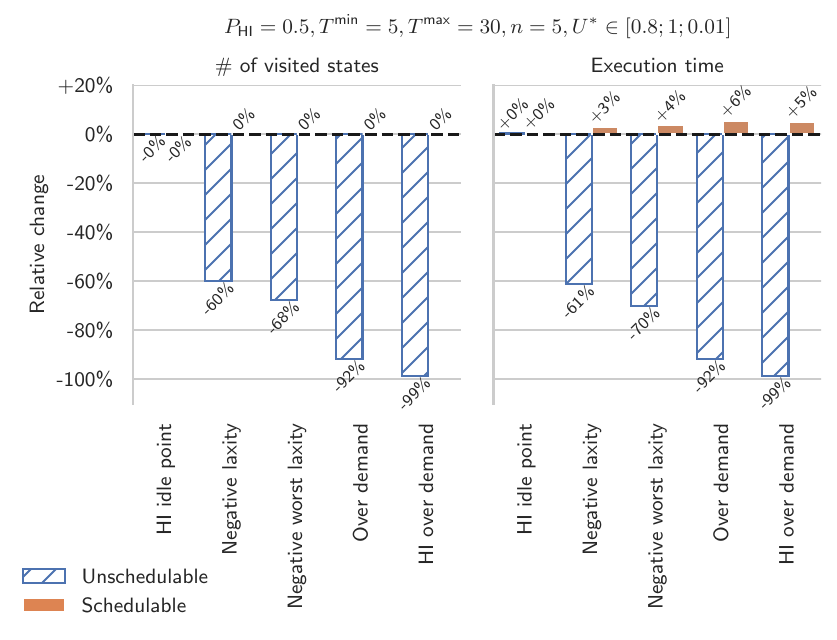}
    \caption{%
        Oracles' impact on the number of states visited and the execution time.
    }
    \label{fig:oracle-impact}
\end{figure}
For each oracle, \autoref{fig:oracle-impact} shows the avoided state ratio and speed-up ratio between ACBFS used with and without the oracle.
The reported values are median over all explorations, split over task sets schedulability.

The $\hi$ idle point safe oracle barely reduces the number of visited states for schedulable and unschedulable task sets (0.02\% and 0.07\%).
%
The execution time increases are both 0.4\%, reflecting the additional computing cost of the oracle.
%
%
%
%

The first unsafe oracle, negative laxity, already allows avoiding 60.1\% of the states, leading to an execution time reduction of 61.2\% for unschedulable task set, while increasing the execution time by 3.0\% of the schedulable ones
%
%
--- computing the laxity
only requires $\mathcal{O}(n)$ operations
on each visited state.
%

Then, the negative worst laxity unsafe oracle builds on top of the laxity and brings another level of impact.
67.7\% of the states are dropped, translating to a time execution reduction of 70.1\% for unschedulable task sets, while only having a median time execution increase of 3.6\% on schedulable task sets.
This significant impact comes from the anticipation of what will happen in $\hi$ mode when still in $\lo$ mode,
hence enabling to detect a deadline miss \emph{earlier}.
%

The over demand oracle computes the current mode demand.
%
This approach brings a step change as it avoids 91.6\% of the states and reduces the execution time by 91.7\% for unschedulable task sets, while increasing the execution time of the schedulable task sets by 5.6\%.
While strong, it is more expensive than the other unsafe oracles, as shown by the increased time with schedulable task sets.

Finally, the $\hi$ over demand oracle combines both worlds by analysing all tasks and anticipating the demand at the next criticality level.
This approach is the strongest, dropping 98.8\% of states, reducing the execution time by 98.6\% for unschedulable task sets and increasing the execution time by 5.2\% on schedulable task sets.

\begin{table}
\centering
\caption{%
    Statistics on the number of visited states.
}
\label{tab:explo-metrics}
\begin{tabular}{lllll}
\multicolumn{5}{c}{$P_\hi = 0.5, T^\min=5, T^\max=20, n=5, U^*\in [0.8;1;0.01]$}\\
\toprule
Search & \multicolumn{2}{l}{BFS} & \multicolumn{2}{l}{ACBFS} \\
Oracle & None & HI over demand & None & HI over demand \\
\midrule
min & 9905 & 35 (-100\%) & 668 (-93\%) & 29 (-100\%) \\
mean & 746974 & 480688 (-36\%) & 75374 (-90\%) & 46024 (-94\%) \\
std & 999984 & 819378 (-18\%) & 126110 (-87\%) & 107731 (-89\%) \\
median & 410063 & 217928 (-49\%) & 35888 (-91\%) & 15459 (-96\%) \\
max & 11875126 & 11875126 (-0\%) & 2687577 (-77\%) & 2687577 (-77\%) \\
\bottomrule
\end{tabular}

\end{table}

To summarise the combined impacts of antichains with idle tasks simulation and the best oracle ($\hi{}$ over demand),
\autoref{tab:explo-metrics} outlines statistics.
Generation parameters are the same as the oracle impact experiment, \autoref{fig:oracle-impact}, except for $T^\max=20$, resulting in balanced schedulability.
The table shows statistics about the composed effect of both the antichains and the best oracle,
achieving up to \speedup{}\% reduction in the searched state space (2968037 visited states for BFS against 42 for ACBFS with the oracle).

\subsection{Scheduling analysis}


Our framework enabled us to step into scheduling experimentations,
aiming to understand the real performance of EDF-VD.
Specifically, automata were explored with ACBFS, the $\hi$ over demand unsafe oracle and without safe oracles.
Task sets were generated with $T^{\min} = 5, T^{\max} = 30, P_\hi = 0.5, n = 5$ and $U^* \in [0.5;1;0.05]$.
For each utilisation, $1000$ task sets were generated.
\autoref{fig:scheduling} shows the average schedulability ratio for each target average utilisation.
It shows the schedulability ratio of EDF-VD's sufficient test~\cite{BaruahBDMSS11},
EDF-VD exact test using our automaton exploration approach,
and of a novel algorithm \textit{Least Worst Laxity First} (LWLF).

\begin{definition}[LWLF scheduler]
  The Least Worst Laxity First scheduler, or LWLF, is defined as follows.
  We let $\MinWL(S)$ be the task $\tau_i\in\Active{S}$ which has the minimal
  worst laxity in $S$.
  That is, $\tau_i$ is s.t. for all $\tau_k\in \Active{S}\setminus\{\tau_i\}$:
  $\wlaxity_S(\tau_k) > \wlaxity_S(\tau_i)$ or $\wlaxity_S(\tau_k) = \wlaxity_S(\tau_i) \wedge k > i$.
  Then, for all states $S$, we let:
  \begin{numcases}
    {\lwlf(S)=}
      \bot & if $\Active{S}=\emptyset$ \nonumber \\
      \MinWL(S)  & otherwise. \nonumber
  \end{numcases}
\end{definition}

EDF-VD's actual schedulability significantly outperforms its sufficient test, underscoring the test's pessimism.
Notice, however, that this exact test --- even with antichains and oracles --- can be a time-intensive process, whereas the EDF-VD's sufficient test is computed in polynomial time.
For large values of $n$,  the exact test might be impractical, so sufficient tests remain relevant.
One can always first run a sufficient test, and if negative, try the exact analysis.

LWLF brings further schedulability significant improvements over EDF-VD.
We derived LWLF after observing the strong impact of the worst laxity oracle
---
since it can detect a failure earlier, it hinted the existence of a scheduling algorithm integrating this information to guide the scheduling decision.
Similarly, we believe that a scheduling algorithm building on top of the $\hi$ over demand oracle can be derived and is expected to bring even better results.

\section{Related Work}

There is extensive related research on real-time scheduling.
Below, we categorize and summarize the most relevant prior work to ours.

\textbf{Scheduling tests.}
Utilisation based conditions were produced, both sufficient~\cite{BaruahBDMSS11} and necessary~\cite{santos2015considerations}, the latter also adapting them for multiprocessor.
Demand-bound function sufficient schedulability tests have also been produced, for dual-criticality~\cite{ekberg2012outstanding,chen2014efficient},
then extended to any number of criticalities~\cite{gu2017efficient} and to cope with deadline reduction on mode change~\cite{ekberg2014bounding}.

\textbf{Formal verification for mixed-criticality scheduling.}
Burns and Jones used time bands and rely-guarantee conditions~\cite{jones2020rely, burns2022approach}.
Inspired by Hoare’s logic, they proposed a framework to formally specifying the temporal behaviour of schedulers, 
hence allowing the verification of schedulers and their code generation.
Abdedda\"im~\cite{yasmina2020accurate} provided exact schedulability tests for FJP schedulers
for a more modular mixed criticality model, where only specified subsets of $\lo$ (sporadic) tasks are discarded in $\hi$ mode depending on the subset of $\hi$ tasks exceeding their $\lo$ budget.

\textbf{Using automata to check schedulability.}
\iftrue
Already discussed in our introduction (see~\autoref{sec:introduction}).
\else
As detailed in~\autoref{sec:introduction}, developing an exact scheduling test via automaton exploration was introduced by Baker and Cirinei~\cite{bakerbrute}
for the case of single-criticality sporadic tasks on multi-processor platforms under any scheduling algorithm, including the dynamic priority class.
The approach was extended to multiprocessor scheduling and with the use of antichains in the works of Lindström~\etal~\cite{lindstrom2011faster,geeraerts2013multiprocessor}.
Their study incorporated the antichain technique to improve the performance of Baker's state enumeration algorithm~\cite{bakerbrute}, which suffer from state space explosion.
They also extended the model to arbitrary deadlines.
By comparison, we also implemented antichains
but for uniprocessor \emph{dual-criticality} systems,
complemented by further reduction of the search space by leveraging \emph{oracles}.
Asyaban~\etal~\cite{asyaban2018exact} released an exact schedulability test for uniprocessor dual-criticality systems for FTP schedulers, leveraging the idea of pruning the search space with specific insights.
By contrast, we support any scheduling algorithm (including from the \emph{DP} class).
\fi

\textbf{Antichains.}
The \emph{antichain technique} has been introduced by De~Wulf~\etal~\cite{DDHR06} to improve the practical performances of costly algorithms on automata.
These ideas have later been applied to produce a series of efficient algorithms for automata~\cite{DR10,ACHMV10}, model-checking~\cite{DR09} and games~\cite{RCDH07,FJR11,GGNS18}.

\textbf{Schedule-Abstraction Graphs (SAG).}
Another reachability-based response-time analysis was developed for an exact test on sets of non-preemptive jobs with release jitter and variable execution time with a FJP scheduler upon a uniprocessor~\cite{nasri2017sag}, then upon multiprocessor~\cite{nasri_et_al:LIPIcs.ECRTS.2018.9}, with preemption~\cite{gohari_et_al:LIPIcs.ECRTS.2024.3}, and several kinds of inter-job constraints~\cite{nasri_et_al:LIPIcs.ECRTS.2019.21,9355543,nelissen_et_al:LIPIcs.ECRTS.2022.12}.
Applying partial-order reduction rules into SAG mitigated
combinatorial explosion~\cite{9804591,Ranjha2023}.
Notice a sequence of jobs
--- the expected input of SAG methods ---
\emph{cannot support} the sporadic task model.
By definition, the set of jobs is not known in advance,
which is not the case in systems
that are periodic or with release jitters.

\section{Conclusion}

We have developed a generic framework for exact schedulability assessment in uniprocessor mixed-criticality systems,
by reducing it to the safety problem in an automaton.
We combined insights from the formal verification community
--- like antichains, simulation relations and the ACBFS algorithm ---
and from the real-time research
--- like safe and unsafe oracles ---
to make the approach more practical.
We demonstrated in simulations that those allow to reduce the search space by up to \speedup{}\%,
enabling the evaluation of existing and new scheduling algorithms, EDF-VD and LWLF,
and comparing it to a prior sufficient test~\cite{BaruahBDMSS11}.

\textbf{Future work.}
Although the scalability of our approach remain limited (up to 8 tasks),
our algorithm is generic,
allowing one to incorporate a range of other optimisations,
and can be applied to other scheduling problems.
%
%
Our work can be extended to consider \emph{more realistic} mixed criticality models~\cite{burns2022mixed},
\eg regarding $\lo$-criticality tasks that should not be abandoned callously.
We then need to modify the completion transition, in order to
let $\lo$-tasks run until completion
or adjust the period and deadline of certain $\lo$-tasks.
A \emph{multi-processor} version of the model is already supported in prior work for sequential jobs~\cite{lindstrom2011faster,geeraerts2013multiprocessor}, we would have to combine it with the extensions proposed in this work.
In addition, the reduction can be adapted to support other platforms and task models
such as modelling \emph{varying CPU speed} and \emph{job-level parallelism} (\eg the gang model),
by reducing the $\rct$ values according to the CPU speed and the number of CPU cores assigned to the job.
The $\schedule(S)$ function would return the selected CPU speed factor together with how many CPUs are assigned to each job.
A power function could then be used to label the clock-tick transitions with numeric values,
hinting shortest path exploration (instead of breadth-first search) to minimize power consumption.
%
%
To account for \emph{preemption delays}, the model could force idle CPU time by forcing $\run=\bot$ for a given number of run transitions when the scheduler is switching to a different task across calls.

\bibliographystyle{alpha}
\bibliography{bibliography}
\clearpage

\appendix

\section{Proof of 
  Proposition~\ref{prop:algo-correct-and-terminates}}\label{app:proof-algo}

In the present section, we prove the correcteness and termination of
our antichain- and oracle-enhanced algorithm. For the sake of
readability, we recall the proposition here:

\textsc{Proposition \ref{prop:algo-correct-and-terminates}.}
On all automata $A$, \autoref{acbfs-safe} terminates and returns
`\texttt{Fail}' iff $A$ is unsafe.

For the sake of readability of this section, we recall that
\autoref{acbfs-safe} relies on the computation of two sequences
$\left(\Rtilde_i\right)_{i\geq 0}$ and
$\left(\Ntilde_i\right)_{i\geq 0}$ defined as follows:

\begin{align*}
  \Ntilde_0 = \Rtilde_0 = & \{v_0\}\setminus\dc{\Safe} \\
  \forall i\geq 0: \Ntilde_{i+1} = & \Max{\Succ{\Ntilde_i}\setminus
  \dc{\Rtilde_i\cup \Safe}} \\
  \forall i\geq 0: \Rtilde_{i+1} = & \Max{\Rtilde_i\cup \Ntilde_{i+1}}.
\end{align*}

We proceed with the proof by introducing some ancillary results.  We
first prove some easy observations regarding the two sequences
$\left(\Rtilde_i\right)_{i\geq 0}$ and
$\left(\Ntilde_i\right)_{i\geq 0}$.
\begin{lemma}\label{lemma:obs-seq}
  The following holds on
  $\left(\Rtilde_i\right)_{i\geq 0}$ and
  $\left(\Ntilde_i\right)_{i\geq 0}$:
  \begin{enumerate}
  \item\label{item:1} for all $i\geq 0$:
    $\Ntilde_i\subseteq\Reach{v_0}$;
  \item\label{item:2} for all $i\geq 0$, for all $v\in \Rtilde_i$:
    there exists $0\leq j\leq i$ s.t. $v\in\Ntilde_j$;
  \item\label{item:3} for all $i\geq 0$:
    $\Rtilde_i\subseteq\Reach{v_0}$;
  \item\label{item:4} for all $i\geq 0$: $\Ntilde_i=\emptyset$ implies
    that $\Ntilde_j=\emptyset$ for all $j\geq i$;
  \item\label{item:5} for all $i\geq 0$:
    $\dc{\Rtilde_i}\subseteq\dc{\Rtilde_{i+1}}$.
  \end{enumerate}
\end{lemma}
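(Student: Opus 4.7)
The plan is to handle each of the five items in turn, with items~\ref{item:1}--\ref{item:4} being straightforward inductions or direct consequences of earlier items, and item~\ref{item:5} requiring a small chain-walking argument that carefully uses the preorder structure of $\simrel$.

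For item~\ref{item:1}, I would induct on $i$. The base case $\Ntilde_0 \subseteq \{v_0\} \subseteq \Reach{v_0}$ is immediate since $v_0$ reaches itself via the empty path. For the step, I would note that the definition yields $\Ntilde_{i+1} \subseteq \Succ{\Ntilde_i}$ because both the $\Max$ operation and the removal of $\dc{\Rtilde_i \cup \Safe}$ can only shrink the set; combined with the induction hypothesis and the fact that successors of reachable states are reachable, this gives the result. For item~\ref{item:2}, I would again induct on $i$: the base is immediate from $\Rtilde_0 = \Ntilde_0$, and for the step the definition gives $\Rtilde_{i+1} \subseteq \Rtilde_i \cup \Ntilde_{i+1}$, so any $v \in \Rtilde_{i+1}$ either lies in $\Ntilde_{i+1}$ (take $j=i+1$) or in $\Rtilde_i$, in which case the induction hypothesis supplies a suitable $j \leq i$.

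Item~\ref{item:3} then falls out immediately by combining items~\ref{item:1} and~\ref{item:2}: every $v \in \Rtilde_i$ lies in some $\Ntilde_j$ with $j \leq i$, which is included in $\Reach{v_0}$. Item~\ref{item:4} is a trivial induction: if $\Ntilde_i = \emptyset$ then $\Succ{\Ntilde_i} = \emptyset$, hence $\Ntilde_{i+1} = \Max{\emptyset \setminus \dc{\Rtilde_i \cup \Safe}} = \emptyset$, and iterating extends this to all $j \geq i$.

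The only genuinely nontrivial case is item~\ref{item:5}, which I view as the main (and only) obstacle. Here I would take $u \in \dc{\Rtilde_i}$, pick $v \in \Rtilde_i$ with $u \simrel v$, and look at the finite set $Y = \Rtilde_i \cup \Ntilde_{i+1}$. The goal is to find $w \in \Max{Y} = \Rtilde_{i+1}$ with $v \simrel w$; transitivity then gives $u \simrel w$ and hence $u \in \dc{\Rtilde_{i+1}}$. Since $\simrel$ is only required to be a preorder (not antisymmetric), the existence of such a maximal $w$ above $v$ is not entirely trivial: distinct elements may form $\simrel$-cycles, so the naive ``walk upwards'' argument has to be justified by passing to the quotient partial order on $\simrel$-equivalence classes, picking a maximal equivalence class above $v$'s, and taking any representative. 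This is the one point in the proof where I expect to need to be slightly careful; once it is in place, the rest of the argument is purely mechanical.
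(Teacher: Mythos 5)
Your proposal is correct and follows essentially the same route as the paper's proof: items~1--4 are handled by the same immediate observations and inductions, and item~5 rests on the same underlying fact, namely that applying $\Max{\cdot}$ loses nothing up to downward closure. The only difference is in presentation: the paper dispatches item~5 in one line via the set identity $\dc{\Rtilde_{i+1}} = \dc{\Max{\Rtilde_i\cup\Ntilde_{i+1}}} = \dc{\Rtilde_i\cup\Ntilde_{i+1}} = \dc{\Rtilde_i}\cup\dc{\Ntilde_{i+1}}$, whereas you unpack the middle equality element by element (pick $v\in\Rtilde_i$ above $u$, find a maximal $w$ above $v$, conclude by transitivity). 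Your worry about $\simrel$ being only a preorder is a legitimate point the paper glosses over, but be aware that your proposed fix (quotient by mutual simulation, pick a maximal equivalence class, take any representative) only works if $\Max{Y}$ is understood as the set of elements not \emph{strictly} simulated by another element of $Y$; under the literal reading ``not simulated by any other state'', a maximal cycle of distinct, mutually simulating states would be removed entirely, and then $\dc{\Max{Y}}=\dc{Y}$ fails for your argument and for the paper's identity alike. For the concrete relation $\idle$ used later the issue is moot, since $\idle$ is antisymmetric and hence a partial order, so no such cycles exist.
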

\begin{proof}\leavevmode

	\begin{enumerate}
		\item This point stems from the fact that
		      $\Ntilde_0\subseteq \{v_0\}\subseteq \Reach{v_0}$, and that, for all
		      $i\geq 1$, $\Ntilde_i\subseteq\Succ{\Ntilde_{i-1}}$.
		\item This point is easily obtained by observing the
		      definitions. For $i=0$, we have $\Ntilde_0=\Rtilde_0$. Then, all
		      sets $\Rtilde_i$ (with $i\geq 1$) are obtained by adding elements
		      from $N_i$ (and removing non-maximal elements);
		\item This point stems from the fact that
		      $\Rtilde_0\subseteq \{v_0\}\subseteq \Reach{v_0}$ and that all sets
		      $\Rtilde_i$ (for $i\geq 1$) contain only elements from the sets
		      $\Ntilde_j$ (with $j\leq i$), which are all reachable from $v_0$
		      by point~\ref{item:1}.

		\item This point stems from the fact that
		      $\Succ{\emptyset}=\emptyset$. Since
		      $\Ntilde_{i+1}\subseteq \Succ{\Ntilde_i}$ for all $i\geq 0$, the
		      fact that $\Ntilde_k=\emptyset$ implies that
		      $\Ntilde_{k+1}=\emptyset$ as well, and the argument carries on
		      inductively to show that all $\Ntilde_{j}$ with $j\geq k$ are
		      empty as well.
		\item Finally, this last point stems from the definition. For all
		      $i\geq 0$, the following holds:
		      \begin{align*}
			      \dc{\Rtilde_{i+1}} & = \dc{\Max{\Rtilde_i\cup \Ntilde_{i+1}}} \\
			                         & = \dc{\Rtilde_i\cup \Ntilde_{i+1}}       \\
			                         & = \dc{\Rtilde_i}\cup \dc{\Ntilde_{i+1}}.
		      \end{align*}
		      Hence, $\dc{\Rtilde_i}\subseteq \dc{\Rtilde_{i+1}}$ for all
		      $i\geq 0$.
	\end{enumerate}
\end{proof}

While points~\ref{item:1} and \ref{item:3} above show that the
sequence computes only reachable elements, we also need to show that
the agressive optimisation introduced by the antichains, and the
$\Safe$ and $\Unsafe$ sets do not `cut too many states'. This is the
point of the next lemma:
\begin{lemma}\label{lem:path-anti}
	Let $v_0,v_1,\ldots, v_k$ be a path s.t., for all $0\leq i\leq k$:
	$v_i\not\in\dc{\Safe}$ and $v_i\not\in\uc{\Unsafe}$. Then, for all
	$0\leq i\leq k$: $v_i\in\dc{\Rtilde_i}$.
\end{lemma}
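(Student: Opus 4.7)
}
The plan is to proceed by induction on $i$, using the simulation property (\autoref{def:simulation}) to lift each move along the path into the sequence $(\Rtilde_i)_{i\geq 0}$, and the ancillary results of \autoref{lemma:obs-seq}. The base case $i=0$ is immediate: since $v_0\not\in\dc{\Safe}$, we have $\Rtilde_0=\{v_0\}$, and by reflexivity of $\simrel$, $v_0\in\dc{\Rtilde_0}$.

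For the inductive step, assuming $v_i\in\dc{\Rtilde_i}$, I would pick some $w_i\in\Rtilde_i$ with $v_i\simrel w_i$. By \autoref{lemma:obs-seq}(\ref{item:2}), there exists $j$ with $0\leq j\leq i$ such that $w_i\in\Ntilde_j$. Applying the simulation property to the transition $(v_i,v_{i+1})\in E$, I obtain a state $w'_{i+1}$ with $(w_i,w'_{i+1})\in E$ and $v_{i+1}\simrel w'_{i+1}$. In particular, $w'_{i+1}\in\Succ{\Ntilde_j}$. The rest of the argument is a case analysis on where this successor lands, following the definition
\[
\Ntilde_{j+1}=\Max{\Succ{\Ntilde_j}\setminus\dc{\Rtilde_j\cup\Safe}}.
\]

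The case analysis splits into three branches. \textbf{(a)} If $w'_{i+1}\in\dc{\Safe}$, then by transitivity of $\simrel$, $v_{i+1}\in\dc{\Safe}$, contradicting the hypothesis $v_{i+1}\not\in\dc{\Safe}$. \textbf{(b)} If $w'_{i+1}\in\dc{\Rtilde_j}$, then $v_{i+1}\in\dc{\Rtilde_j}$ by transitivity, and by iterating \autoref{lemma:obs-seq}(\ref{item:5}) from index $j$ up to $i+1$ (which is legitimate because $j\leq i<i+1$), I get $v_{i+1}\in\dc{\Rtilde_{i+1}}$. \textbf{(c)} Otherwise $w'_{i+1}\in\Succ{\Ntilde_j}\setminus\dc{\Rtilde_j\cup\Safe}$; then either $w'_{i+1}$ is itself $\simrel$-maximal in that set and hence belongs to $\Ntilde_{j+1}$, or it is simulated by some maximal element $w''_{i+1}\in\Ntilde_{j+1}$. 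Either way, there is an element of $\Ntilde_{j+1}$ that simulates $w'_{i+1}$ (taking $w'_{i+1}$ itself in the first subcase). Since $\Ntilde_{j+1}\subseteq\Rtilde_{j+1}\cup\dc{\Rtilde_{j+1}}$ (any element either survives the $\Max$ in the definition of $\Rtilde_{j+1}$, or is simulated by an element that does), I conclude $v_{i+1}\in\dc{\Rtilde_{j+1}}$, and again by \autoref{lemma:obs-seq}(\ref{item:5}), $v_{i+1}\in\dc{\Rtilde_{i+1}}$.

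The main obstacle, and the place where care is needed, is the bookkeeping of indices: the witness $w_i\in\Rtilde_i$ may have been introduced at an earlier step $j<i$, so its successor is compared to $\Rtilde_j$ rather than $\Rtilde_i$. The monotonicity statement of \autoref{lemma:obs-seq}(\ref{item:5}), iterated from $j+1$ to $i+1$, is exactly what bridges this gap. The other subtle point is the $\Max$ operation in both case (c) and in the definition of $\Rtilde_{j+1}$: one must argue that even when an element is discarded for not being maximal, it is nevertheless in the downward-closure of the set, so the downward-closures $\dc{\Ntilde_{j+1}}$ and $\dc{\Rtilde_{j+1}}$ retain the needed simulating witnesses. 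Once these two observations are in place, the induction closes cleanly.
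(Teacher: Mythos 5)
Your proof is correct and follows essentially the same route as the paper's: induction along the path, locating a witness in some $\Ntilde_j$ via \autoref{lemma:obs-seq} (point~\ref{item:2}), a case split on whether the successor falls into $\dc{\Rtilde_j\cup\Safe}$ or survives into $\Ntilde_{j+1}$ (up to closure), and the monotonicity of the sets $\dc{\Rtilde_i}$ (point~\ref{item:5}) to bridge the index gap. If anything, you are more careful than the paper's own write-up, which passes directly from the induction hypothesis $v_{k-1}\in\dc{\Rtilde_{k-1}}$ to the assertion $v_{k-1}\in\Ntilde_\ell$: your explicit choice of a simulating witness $w_i\in\Rtilde_i$, the lifting of the edge $(v_i,v_{i+1})$ through the simulation property, the transitivity argument ruling out $\dc{\Safe}$, and the explicit treatment of the maximal-element filtering are exactly how that step should be justified --- the paper itself deploys this witness-lifting device only later, in the completeness part of the proof of Proposition~\ref{prop:algo-correct-and-terminates}.
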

\begin{proof}
	The proof is by induction on the length of the path.

	For the base case ($k=0$), the lemma is trivial.

	For the inductive case, we consider a path $v_0,v_1,\ldots, v_k$
	that satisfies the conditions of the lemma, and we assume (induction
	hypothesis) that $v_{k-1}\in \dc{\Rtilde_{k-1}}$. Let us prove that
	$v_k \in \dc{\Rtilde_{k}}$.  First, we let $\ell\leq k-1$ be
	s.t. $v_{k-1}\in\Ntilde_\ell$; such an $\ell$ exists by
	\autoref{lemma:obs-seq}, point~\ref{item:2}, since
	$v_{k-1}\in \dc{\Rtilde_{k-1}}$. By definition of the
	$\left(\Ntilde_i\right)_{i\geq 0}$ sequence, we know that:
	\begin{align*}
		\Ntilde_{\ell+1} & =\max{\Succ{\Ntilde_\ell}\setminus\dc{\Rtilde_\ell\cup\Safe}}.
	\end{align*}
	Since we consider a path $v_0,v_1,\ldots, v_k$, we know that
	$v_k\in\Succ{v_{k-1}}\subseteq\Succ{\Ntilde_\ell}$. We consider two
	cases:
	\begin{enumerate}
		\item either $v_k\in\dc{\Rtilde_\ell}$. In this case, by
		      \autoref{lemma:obs-seq}, point~\ref{item:5}:
		      $v_k\in\dc{\Rtilde_k}$, since $\ell\leq k$;
		\item or $v_k\not\in\dc{\Rtilde_k}$. In this case, since
		      $v_k\not\in\Safe$, by hypothesis, we conclude that
		      $v_k\in\Ntilde_{\ell+1}$, by the above definition of
		      $\Ntilde_{\ell+1}$. By definition of the
		      $\left(\Rtilde_i\right)_{i\geq 0}$, this implies that
		      $v_k\in\dc{\Rtilde_{\ell+1}}$. Hence, $v_k\in\dc{\Rtilde_k}$ as
		      well, by \autoref{lemma:obs-seq}, point~\ref{item:5}, as
		      $\ell\leq k-1$.
	\end{enumerate}
	In both cases, we conclude that  $v_k\in\dc{\Rtilde_k}$. Hence the Lemma.
\end{proof}

We are now ready to prove the algorithm.

{\it Proposition \ref{prop:algo-correct-and-terminates}:} On all
automata $A$, Algorithm~\ref{acbfs-safe} terminates and returns
`\texttt{Fail}' iff $\Reach{A}\cap F\neq \emptyset$.
\begin{proof}
We first establish termination. It stems from points~\ref{item:3}
	and~\ref{item:5} of Lemma~\ref{lemma:obs-seq}:
	\begin{enumerate}
		\item for all $i\geq 0$,
		$\dc{\Rtilde_i}\subseteq\dc{\Rtilde_{i+1}}$;
		\item $\Rtilde_i\subseteq\Reach{A}$, hence
		$\dc{\Rtilde_i}\subseteq\dc{\Reach{A}}$; and
		\item $\Reach{A}$ is a finite set, hence $\dc{\Reach{A}}$ is finite
		too.
	\end{enumerate}
	Thus, we cannot have an infinite $\subset$-increasing chain of
	$\dc{\Rtilde_i}$. Thus, let $\ell$ be the least index
	s.t. $\dc{\Rtilde_\ell}=\dc{\Rtilde_{\ell+1}}$. As all $\Rtilde_i$
	contain only $\preceq$-incomparable elements by definition, this
	implies that $\Rtilde_\ell=\Rtilde_{\ell+1}$. Since
	$\Rtilde_{\ell+1}=\Max{\Rtilde_\ell\cup \Ntilde_{\ell+1}}$, we
	conclude that $\Ntilde_{\ell+1}\subseteq\dc{\Rtilde_\ell}$. However,
	by definition $\Ntilde_{\ell+1}\cap\dc{\Rtilde_\ell}=\emptyset$. The
	only way to satisfy these constraints is that
	$\Ntilde_{\ell+1}=\emptyset$. Hence, the algorithm terminates.
	\medskip

	Next, we establish soundness, i.e., when the algorithm returns
	`\texttt{Fail}', we have indeed
	$\Reach{A}\cap F\neq \emptyset$. The algorithm returns
	`\texttt{Fail}' only if
	$\dc{\Ntilde_i}\cap\uc{\Unsafe}\neq\emptyset$ for some $i$. This
	implies that $\Ntilde_i\cap\uc{\Unsafe}\neq\emptyset$. However, by
	Lemma~\ref{lemma:obs-seq}, point~\ref{item:1}, this implies that
	$\Reach{A}=\Reach{v_0}\cap \uc{\Unsafe}\neq\emptyset$, or, in other
	words, that we have found at least one reachable element
	$v\in\Reach{A}$ which is $\preceq$-larger than some element from
	$\Unsafe$. As such, $v$ is also an unsafe sate. Thus,
	$\Reach{A}\cap F\neq \emptyset$ \medskip

	Finally, we prove completeness of the algorithm, i.e., when
	$\Reach{A}\cap F\neq \emptyset$, the algorithm returns
	`\texttt{Fail}'. Since $\Reach{A}\cap F\neq \emptyset$, let
	$\rho=v_0, v_1,\ldots v_k$ be a path s.t.
	$v_k\in F$. Let $\ell$ be the least index
	s.t. $v_\ell\in\uc{\Unsafe}$. Such an index necessarily exists since
	$F\subseteq \uc{\Unsafe}$. Since $\rho$ reaches $F$, we also know
	that $v_i\not\in\Safe$ for all $0\leq i\leq k$. First, in the case
	where $\ell=0$, i.e. $v_0\in\uc{\Unsafe}$, then
	$\dc{\Ntilde_0}\cap\uc{\Unsafe}\neq\emptyset$ since $v_0$ is
        the initial state,
	and the algorithm returns `\texttt{Fail}' during the first
	iteration of the \textbf{repeat} loop. Otherwise, we need to show
	that $\dc{\Ntilde_K}\cap\uc{\Unsafe}\neq\emptyset$ for some $K$, and
	that the algorithm will indeed compute enough iterations of the
	\textbf{repeat} loop to eventually return `\texttt{Fail}'.

	We first show that $\dc{\Ntilde_K}\cap\uc{\Unsafe}\neq\emptyset$ for
	some $K$. To this end, we apply Lemma~\ref{lem:path-anti} to
	$v_0, v_1,\ldots, v_{\ell-1}$, which satisfies the conditions of the
	lemma. We obtain that $v_{\ell-1}\in \dc{\Rtilde_{\ell-1}}$. Let
	$v'_{\ell-1}\in\Rtilde_{\ell-1}$ be s.t.
	$v_{\ell-1}\preceq v'_{\ell-1}$, and let $i$ be the index
	s.t. $v'_{\ell-1}\in \Ntilde_i$. Such an $i$ exists by
	Lemma~\ref{lemma:obs-seq}, point~\ref{item:2}. Further, let
	$v'_\ell\in\Succ{v'_{\ell-1}}$ be a successor of $v'_\ell$
	s.t. $v_\ell\preceq v_{\ell-1}$. Such a $v'_\ell$ exists by
	$\preceq$-monotonicity of the automaton. Observe that
	$v'_\ell\in\Unsafe$ since $v_\ell\in\Unsafe$. Furthermore, since
	$v'_\ell\in\Succ{v'_{\ell-1}}$, with $v'_{\ell-1}\in \Ntilde_i$, we
	have $v'_\ell\in \Succ{\Ntilde_i}$. We consider two cases. Either
	$v'_\ell \in \Ntilde_{i+1}$. In this case, we obtain immediately
	that $\dc{\Ntilde_{i+1}}\cap\uc{\Unsafe}\neq\emptyset$. Or
	$v'_\ell \not\in \Ntilde_{i+1}$. By definition of $\Ntilde_{i+1}$,
	this can happen only because $v'_\ell\in\dc{\Rtilde_i}$, since
	$v'_\ell\not\in\Safe$. From $v'_\ell\in\dc{\Rtilde_i}$, we deduce
	that there is $v''_\ell\in\Rtilde_i$ s.t. $v'_\ell\preceq
		v''_\ell$. Again, this implies that $ v''_\ell\in\Unsafe$, since
	$v'_\ell\in\Unsafe$. By Lemma~\ref{lemma:obs-seq},
	point~\ref{item:2}, there exists $j$ s.t. $v''_\ell$ is also in
	$\Ntilde_j$. Thus, $v''_\ell\in \dc{\Ntilde_j}\cap\uc{\Unsafe}$. In
	both cases, we conclude that
	$\dc{\Ntilde_K}\cap\uc{\Unsafe}\neq\emptyset$ for some $K$.

	To conclude the proof, it remains to show that the algorithm will
	eventually compute enough iterations. Since $\Ntilde_K\neq\empty$,
	we conclude, by Lemma~\ref{lemma:obs-seq}, point~\ref{item:4} that
	$\Ntilde_i\neq\emptyset$ for all $0\leq i\leq K$. Thus, there are
	two possibilities. Either the algorithm terminates before testing
	whether $\dc{\Ntilde_K}\cap\uc{\Unsafe}\neq\emptyset$. This can only
	happen because the algorithm has returned `\texttt{Fail}',
	since the condition of the \textbf{until} will not be fulfilled
	before. Or the algorithm gets to the point where it tests
	$\dc{\Ntilde_K}\cap\uc{\Unsafe}\neq\emptyset$, and it terminates by
	returning `\texttt{Fail}'.
\end{proof}


\section{Proof of  \autoref{theo:idle-sim}}\label{app:proof-simu}

\quad\textsc{\autoref{theo:idle-sim}.}
  Let $\tau$ be a dual-criticality sporadic task system and $\schedule$ a deterministic
  uniprocessor scheduler for $\tau$.
  Then $\idle$ is a simulation relation for $A(\tau, \schedule)$.

\begin{proof}

  Let $S^\sreq_1$ and $S^\srun_1 \in \States{\tau}$, $\tau^\sreq \subseteq \tau $ and $\theta \in \lbrace\true,\false\rbrace$ be s.t. : $S_1 \RqTrans{\tau^\sreq} S^\sreq_1
  \CtTrans{\schedule(S^\sreq_1)} S_1^\srun \CpTrans{\schedule(S^\sreq_1),\theta} S_1^\scom$. 
  Those exist by \autoref{auto} and since $(S_1, S_1^\scom) \in E$.
  Let $S_2^\sreq$ be the (unique) state \sth{} $S_2 \RqTrans{\tau^\sreq} S_2^\sreq$ and let
  us show that $S_1^\sreq \idle S_2^\sreq $:

  \begin{enumerate}

    \item By \autoref{treq}: $\crit_{S_1^\sreq} = \crit_{S_1}, \crit_{S_2^\sreq} =
      \crit_{S_2}$.
      Since $S_1 \idle S_2 $, we also have that $\crit_{S_2} = \crit_{S_1}$.
      We conclude that $\crit_{S_2^\sreq} = \crit_{S_1^\sreq}$.

    \item Let $\tau_i \in \tau^\sreq : \rct_{S_1^\sreq}(\tau_i) = C(\crit_{S_1})$ and
      $\rct_{S_2^\sreq}(\tau_i) = C(\crit_{S_2})$.
      Since $S_1 \idle S_2 $, we know that $\crit_{S_2}(\tau_i) = \crit_{S_1}(\tau_i)$.
      Hence, $\rct_{S_2^\sreq}(\tau_i) = \rct_{S_1^\sreq}(\tau_i)$.
      For a task $\tau_i \notin \tau^\sreq$, we have $\rct_{S_1^\sreq}(\tau_i) =
      \rct_{S_1}(\tau_i)$ and $\rct_{S_2^\sreq}(\tau_i) = \rct_{S_2}(\tau_i)$.
      Since $S_1 \idle S_2 $, we know that $\rct_{S_2}(\tau_i) = \rct_{S_1}(\tau_i)$.
      Hence, $\rct_{S_2^\sreq}(\tau_i) = \rct_{S_1^\sreq}(\tau_i)$.
      We conclude that $\rct_{S_2^\sreq} = \rct_{S_1^\sreq}$.

    \item Let $\tau_i$ be \sth{} $\rct_{S_1^\sreq}(\tau_i) = 0$.
      Then we must have $\tau_i \notin \tau^\sreq$.
      In this case, $\nat_{S_1^\sreq}(\tau_i) = \nat_{S_1}(\tau_i),
      \nat_{S_2^\sreq}(\tau_i) = \nat_{S_2}(\tau_i)$, and since $S_1 \idle S_2 $, we know
      that $\nat_{S_2}(\tau_i) \leq \nat_{S_1}(\tau_i)$.
      Hence, $\nat_{S_2^\sreq}(\tau_i) \leq \nat_{S_1^\sreq}(\tau_i)$.

    \item Let $\tau_i$ be \sth{} $\rct_{S_1^\sreq}(\tau_i) > 0$.
      Then either $\tau_i \in \tau^\sreq$ and $\nat_{S_1^\sreq}(\tau_i) = T_i =
      \nat_{S_2^\sreq}(\tau_i)$.
      Or $\tau_i \notin \tau^\sreq$ and $\nat_{S_1^\sreq}(\tau_i) = \nat_{S_1}(\tau_i),
      \nat_{S_2^\sreq}(\tau_i) = \nat_{S_2}(\tau_i)$, and since $S_1 \idle S_2 $, we know
      that $\nat_{S_2}(\tau_i) = \nat_{S_1}(\tau_i)$, thus $\nat_{S_2^\sreq}(\tau_i) =
      \nat_{S_1^\sreq}(\tau_i)$.

  \end{enumerate}

  Let $\tau_i \in \Active{S_1^\sreq}$, hence $\rct_{S_1^\sreq}(\tau_i) > 0$.
  In this case, and since $S_1^\sreq \idle S_2^\sreq $, we conclude that
  $\nat_{S_1^\sreq}(\tau_i) = \nat_{S_2^\sreq}(\tau_i)$, $\rct_{S_1^\sreq}(\tau_i) =
  \rct_{S_2^\sreq}(\tau_i)$ and $\crit_{S_1^\sreq} = \crit_{S_2^\sreq}$.
  Thus, since $\schedule$ is deterministic by hypothesis, $\schedule(S_1^\sreq) = \schedule(S_2^\sreq)$,
  by \autoref{run}.
  Let $S_2^\srun$ be the (unique) state \sth{} $S_2^\sreq \CtTrans{\schedule(S_2^\sreq)}
  S_2^\srun$ and let us show that $S_1^\srun \idle S_2^\srun $:

  \begin{enumerate}

    \item By \autoref{texec}: $\crit_{S_1^\srun} = \crit_{S_1^\sreq}, \crit_{S_2^\srun} =
      \crit_{S_2^\sreq}$.
      Since $S_1^\sreq \idle S_2^\sreq $, we also have that $\crit_{S_2^\sreq} =
      \crit_{S_1^\sreq}$.
      We conclude that $\crit_{S_2^\srun} = \crit_{S_1^\srun}$.

    \item For $\tau_i = \schedule(S_2^\sreq) = \schedule(S_1^\sreq) : \rct_{S_1^\srun}(\tau_i) =
      \rct_{S_1^\sreq}(\tau_i) -1, \rct_{S_2^\srun}(\tau_i) = \rct_{S_2^\sreq}(\tau_i) -1$.
      Since $S_1^\sreq \idle S_2^\sreq $, we know that $\rct_{S_2^\sreq} =
      \rct_{S_1^\sreq}$.
      Hence $\rct_{S_2^\srun}(\tau_i) = \rct_{S_1^\srun}(\tau_i)$.
      For $\tau_i \neq \schedule(S_2^\sreq) = \schedule(S_1^\sreq) : \rct_{S_1^\srun}(\tau_i) =
      \rct_{S_1^\sreq}(\tau_i), \rct_{S_2^\srun}(\tau_i) = \rct_{S_2^\sreq}(\tau_i)$.
      Hence $\rct_{S_2^\srun}(\tau_i) = \rct_{S_1^\srun}(\tau_i)$.
      We then conclude that $\rct_{S_2^\srun} = \rct_{S_1^\srun}$.

    \item By \autoref{texec}: $\nat_{S_1^\srun}(\tau_i) = \max\{\nat_{S_1^\sreq}(\tau_i)-1,
      0\}$ and $\nat_{S_2^\srun}(\tau_i) = \max\{\nat_{S_2^\sreq}(\tau_i)-1, 0\}$.

      \begin{enumerate}

        \item When $\tau_i$ is \sth{} $\rct_{S_1^\srun}(\tau_i) = 0$, since $S_1 \idle S_2 $, we know that $\nat_{S_2^\sreq}(\tau_i) \leq \nat_{S_1^\sreq}(\tau_i)$.
          We conclude that $\nat_{S_2^\srun}(\tau_i) \leq \nat_{S_1^\srun}(\tau_i)$.

        \item  When $\tau_i$ is \sth{} $\rct_{S_1^\srun}(\tau_i) > 0$, we have
          $\rct_{S_1^\sreq}(\tau_i) > 0$ too as $\rct_{S_1^\sreq}(\tau_i) \geq
          \rct_{S_1^\srun}(\tau_i) > 0$.
          Since $S_1^\sreq \idle S_2^\sreq $ and $\rct_{S_1^\sreq}(\tau_i) >0$ we have also
          $\nat_{S_2^\sreq}(\tau_i) = \nat_{S_1^\sreq}(\tau_i)$.
          We conclude that $\nat_{S_2^\srun}(\tau_i) = \nat_{S_1^\srun}(\tau_i)$.

      \end{enumerate}

  \end{enumerate}

  Let $S_2^\scom$ be the (unique) state \sth{} $S_2^\srun \CpTrans{\schedule(S_2^\sreq),\theta}
  S_2^\scom$.
  Per \autoref{tcom}:
  \[S_2^\scom \in \{S_2^\srun, \sigCmp{S^\srun_2}{\tau_r},
  \critUp{S^\srun_2}{\tau_r}\}\] with $\tau_r=\schedule(S_2^\sreq)=\schedule(S_1^\sreq)$.
  Obviously, $S_1^\scom = S_1^\srun \idle S_2^\scom = S_2^\srun $ for the first outcome.
  Next, let us prove that $S_1^\scom = \sigCmp{S^\srun_1}{\tau_r} \idle S_2^\scom =
  \sigCmp{S^\srun_2}{\tau_r}$:

  \begin{enumerate}

    \item $\crit_{S_1^\scom} = \crit_{S_1^\srun}, \crit_{S_2^\scom} = \crit_{S_2^\srun}$.
      Since $S_1^\srun \idle S_2^\srun $, we also have that $\crit_{S_2^\srun} =
      \crit_{S_1^\srun}$.
      We conclude that $\crit_{S_2^\scom} = \crit_{S_1^\scom}$.
    \item $\rct_{S_2^\scom}(\tau_r) = 0 = \rct_{S_1^\scom}(\tau_r)$.
      For $\tau_i \neq \tau_r$, we have $\rct_{S_1^\scom}(\tau_i) =
      \rct_{S_1^\srun}(\tau_i)$ and $\rct_{S_2^\scom}(\tau_i) = \rct_{S_2^\srun}(\tau_i)$.
      Since $S_1^\srun \idle S_2^\srun $, we know that $\rct_{S_2^\srun}(\tau_i) =
      \rct_{S_1^\srun}(\tau_i)$.
      Hence, $\rct_{S_2^\scom}(\tau_i) = \rct_{S_1^\scom}(\tau_i)$.
      We conclude that $\rct_{S_2^\scom} = \rct_{S_1^\scom}$.

    \item For all $\tau_i \in \tau: \nat_{S_1^\scom}(\tau_i) = \nat_{S_1^\srun}(\tau_i)$
      and $\nat_{S_2^\scom}(\tau_i) = \nat_{S_2^\srun}(\tau_i)$

      \begin{enumerate}
        \item When $\tau_i$ is \sth{} $\rct_{S_1^\scom}(\tau_i) = 0$, since $S_1^\srun \idle S_2^\srun$, we know that $\nat_{S_2^\srun}(\tau_i) \leq
          \nat_{S_1^\srun}(\tau_i)$.
          Thus, $\nat_{S_2^\scom}(\tau_i) \leq \nat_{S_1^\scom}(\tau_i)$.
        \item When $\tau_i$ is \sth{} $\rct_{S_1^\scom}(\tau_i) > 0$ then $\tau_i \neq
          \tau_r$, therefore $\rct_{S_1^\scom}(\tau_i) = \rct_{S_1^\srun}(\tau_i)$ and
          since $S_1^\srun \idle S_2^\srun $, we know that $\nat_{S_2^\srun}(\tau_i) =
          \nat_{S_1^\srun}(\tau_i)$.
          Thus, $\nat_{S_2^\scom}(\tau_i) = \nat_{S_1^\scom}(\tau_i)$.

      \end{enumerate}

  \end{enumerate}

  Similarly, let us also prove that $S_1^\scom = \critUp{S^\srun_1}{\tau_r} \idle S_2^\scom
  = \critUp{S^\srun_2}{\tau_r}$:

  \begin{enumerate}

    \item $\crit_{S_2^\scom} = \hi = \crit_{S_1^\scom}$.

    \item $\rct_{S_2^\scom}(\tau_r) = C_r(\hi)-C_r(\lo) = \rct_{S_1^\scom}(\tau_r)$.
      For $\tau_i \neq \tau_r$, we have $\rct_{S_1^\scom}(\tau_i) =
      \rct_{S_1^\srun}(\tau_i)+C_i(\hi)-C_i(\lo)$ when $L_i = \hi \wedge
      \rct_{S_1^\srun}(\tau_i)>0$ and $\rct_{S_1^\scom}(\tau_i) = 0$ otherwise.
      Similarly, we have $\rct_{S_2^\scom}(\tau_i) =
      \rct_{S_2^\srun}(\tau_i)+C_i(\hi)-C_i(\lo)$ when $L_i = \hi \wedge
      \rct_{S_2^\srun}(\tau_i)>0$ and $\rct_{S_2^\scom}(\tau_i) = 0$ otherwise.
      Since $S_1^\srun \idle S_2^\srun $, we know that $\rct_{S_2^\srun}(\tau_i) =
      \rct_{S_1}^\srun(\tau_i)$.
      Thus $\rct_{S_2^\scom}(\tau_i) = \rct_{S_1^\scom}(\tau_i)$ for both cases.
      We conclude that $\rct_{S_2^\scom} = \rct_{S_1^\scom}$.

    \item For all $\tau_i \in \tau: \nat_{S_1^\scom}(\tau_i) = \nat_{S_1^\srun}(\tau_i)$
      and $\nat_{S_2^\scom}(\tau_i) = \nat_{S_2^\srun}(\tau_i)$

      \begin{enumerate}
        \item When $\tau_i$ is \sth{} $\rct_{S_1^\scom}(\tau_i) = 0$, since $S_1^\srun \idle S_2^\srun$, we know that $\nat_{S_2^\srun}(\tau_i) \leq
          \nat_{S_1^\srun}(\tau_i)$.
          Thus, $\nat_{S_2^\scom}(\tau_i) \leq \nat_{S_1^\scom}(\tau_i)$.

        \item When $\tau_i$ is \sth{} $\rct_{S_1^\scom}(\tau_i) > 0$:

          \begin{enumerate}
            \item for $\tau_i \neq \tau_r$ then $\rct_{S_1^\srun}(\tau_i) > 0$.
              Since $S_1^\srun \idle S_2^\srun $ and $rct_{S_1^\srun}(\tau_i) >0$ we have
              also $\nat_{S_2^\srun}(\tau_i) = \nat_{S_1^\srun}(\tau_i)$.
              We conclude that $\nat_{S_2^\scom}(\tau_i) = \nat_{S_1^\scom}(\tau_i)$.

            \item for $\tau_r$ then $\rct_{S_1^\srun}(\tau_r) = 0$ per definition, and
              $S_1^\srun \idle S_2^\srun $ only implies that $\nat_{S_2^\srun}(\tau_r)
              \leq \nat_{S_1^\srun}(\tau_r)$.
              But, since $\tau_r = \schedule(S_1^\sreq)$, we must have that
              $\rct_{S_1^\sreq}(\tau_r) > 0$ because $\schedule(S_1^\sreq) \in \Active{S_1}$.
              Hence $\nat_{S_2^\sreq}(\tau_r) = \nat_{S_1^\sreq}(\tau_r)$ by $S_1^\sreq \idle S_2^\sreq$.
              Therefore $\nat_{S_2^\srun}(\tau_r) = \nat_{S_1^\srun}(\tau_r)$ by
              \autoref{texec} and we conclude that $\nat_{S_2^\scom}(\tau_r) =
              \nat_{S_1^\scom}(\tau_r)$.

          \end{enumerate}

      \end{enumerate}

  \end{enumerate}

  Then observe that $S_1^\srun \idle S_2^\srun $ implies $\rct_{S_2^\srun} =
  \rct_{S_1^\srun}$ and $\crit_{S_2^\srun} = \crit_{S_1^\srun}$, thus
  $\Completed{S^\srun_2} = \Completed{S^\srun_1}$.
  Notice how the three outcomes for $S_1^\scom$ depends only on $\rct_{S_1^\srun}, \Completed{S^\srun_1}, \theta$ and $\schedule(S_1^\sreq)$.
  As $\theta$ is fixed, and as we established that $\schedule(S_1^\sreq)=\schedule(S_2^\sreq)$ and since $S_1^\srun \idle S_2^\srun$, we have $\rct_{S_1^\srun}=\rct_{S_2^\srun}$, then the outcome of the signal transition for $S_2^\srun$ will be the same as for $S_1^\srun$.
  Hence, $S_1^\scom \idle S_2^\scom $ for every case.

  To conclude the proof, it remains to prove that, if $S_1 \idle S_2 $ and $S_1 \in
  \DeadlineMiss{\tau}$ then $S_2 \in \DeadlineMiss{\tau}$ also.
  Let $\tau_i$ such that $\rct_{S_1}(\tau_i) > 0 \wedge \ttd_{S_1}(\tau_i) \leq 0$.
  Since $S_1 \idle S_2 $ : $\rct_{S_2}(\tau_i) = \rct_{S_1}(\tau_i)$ and $\nat_{S_2}(\tau_i)
  \leq \nat_{S_1}(\tau_i)$, thus $\ttd_{S_2}(\tau_i) \leq \ttd_{S_1}(\tau_i)$.
  Hence $\rct_{S_2}(\tau_i) > 0 \wedge \ttd_{S_2}(\tau_i) \leq \ttd_{S_1}(\tau_i)
  \leq 0$ and therefore $S_2 \in \DeadlineMiss{\tau}$.

\end{proof}

\end{document}